\newcommand*{\addFileDependency}[1]{
\typeout{(#1)}
%
%
\@addtofilelist{#1}
%
\IfFileExists{#1}{}{\typeout{No file #1.}}
}\makeatother
\newcommand{\cmark}{\ding{51}}%
\newcommand{\xmark}{\ding{55}}%
\newtheorem{theorem}{Theorem}
\newtheorem{proposition}{Proposition}
\newtheorem{remark}{Remark}
\newcommand  {\mps}{m.p.\;}
\newcommand  {\jmps}{$^J$m.p.\;}
\newcommand{\N}{\mathbb{N}}
\newcommand{\B}{\mathbb{B}}
\newcommand\stackrqarrow[2]{%
    \mathrel{\stackunder[2pt]{\stackon[4pt]{$\leadsto$}{$\scriptscriptstyle#1$}}{%
            $\scriptscriptstyle#2$}}}
\definecolor{sable}{RGB}{153,51,0}
\definecolor{macouleur}{RGB}{180,0,0}
\definecolor{ForestGreen}{RGB}{0,180,0}
\definecolor{Aquamarine}{rgb}{0.5, 1.0, 0.83}
\definecolor{aqua}{rgb}{0.0, 1.0, 1.0}
\definecolor{darkcyan}{rgb}{0.0, 0.55, 0.55}
\definecolor{darkpastelpurple}{rgb}{0.59, 0.44, 0.84}
\definecolor{ao(english)}{rgb}{0.0, 0.5, 0.0}
\definecolor{atomictangerine}{rgb}{1.0, 0.6, 0.4}
\definecolor{amaranth}{rgb}{0.9, 0.17, 0.31}
\title{Refining Boolean models with the partial most permissive scheme}
\author[1,2]{Nadine Ben Boina}
\author[1]{Brigitte Mossé}
\author[2,3,4]{ Anaïs Baudot}
\author[1]{Elisabeth Remy}
\affil[1]{Aix Marseille Univ, CNRS, I2M (UMR 7373), Turing Center for Living systems, Marseille, France}
\affil[2]{Aix Marseille Univ, INSERM, MMG, Marseille, France}
\affil[3]{CNRS, Marseille, France}
\affil[4]{Barcelona Supercomputing Centre, Barcelona, Spain}
\begin{document}
\maketitle

\paragraph{Abstract}
\paragraph{Motivation} 
In systems biology, modelling strategies aim to decode how molecular components interact to generate dynamical behaviour. Boolean modelling is more and more used, but the description of the dynamics from two-levels components may be too limited to capture certain dynamical properties. 
Multivalued logical models can overcome this limitation by allowing more than two levels for each component.  However, multivaluing a Boolean model is challenging.

\paragraph{Results} 


We present MRBM, a method for efficiently identifying the components of a Boolean model to be multivalued in order to capture specific fixed-point reachabilities in the asynchronous dynamics. To this goal, we defined a new updating scheme locating reachability properties in the most permissive dynamics. MRBM is supported by mathematical demonstrations and illustrated on a toy model and on two models of stem cell differentiation.

\paragraph{Availability and Implementation} Data and code available at: 

https://github.com/NdnBnBn/MRBM.git.
\paragraph{Contact} elisabeth.remy@univ-amu.fr
\clearpage

\section{Introduction}
Systems biology aims to understand how the interactions between biological components such as molecules, cells, tissues, or organs, give rise to specific phenotypes and behaviours \cite{bruggeman_nature_2007}. Mathematical models, integrating data from experiments and literature to represent biological systems and facilitate their study, play a pivotal role in systems biology. They have been instrumental in studying a broad spectrum of biological systems, ranging from signal transduction \cite{heinrich_mathematical_2002} to complex developmental processes \cite{tomlin_biology_2007}. Importantly, mathematical models reduce the cost and time associated with \textit{in vivo} or \textit{in vitro} experiments. 

Various mathematical frameworks are used to model biological interaction networks \cite{chen_classic_2010, ay_mathematical_2011, torres_mathematical_2015}. We focus here on logical modeling, a formalism that qualitatively captures the main dynamical properties that explain the overall behaviour of the system, without requiring precise parameters. In this formalism, the activity levels of the components of a biological system are represented as discrete variables.
Logical models encompass both Boolean models (BMs) and multivalued models. In BMs, the activity level of the components are represented as binary variables (inactive/active) \cite{thomas_boolean_1973}. However, BMs may not be sufficient to capture the complex dynamics of biological systems. Multivalued models can address this limitation by allowing some components of the model to have more than two levels of activity, while retaining the conceptual simplicity of logical modeling \cite{thomas_logical_1978}. 
When a multivalued model is derived from an existing BM, without adding new components nor modifying the effects of the regulations (activation or inhibition), we refer to this multivalued model as a multivalued refinement of the BM \cite{pauleve_reconciling_2020}.

The dynamics of logical models is driven by logical functions and an updating scheme. The logical functions specify the conditions of activation of each component, while the updating scheme dictates how the system is updated. The choice of updating scheme is crucial as it impacts the dynamics of the system. Two updating schemes are commonly used, the synchronous and the asynchronous updating schemes\footnote{sometimes denoted fully asynchronous} \cite{garg_synchronous_2008}. Under the synchronous scheme, the activity levels of all the components are updated simultaneously, resulting in a deterministic dynamics. Under the asynchronous scheme, a single component's activity level is updated at a time. The resulting dynamics is not deterministic, as some states may have several successors. The asynchronous scheme is often preferred in modeling biological systems. Indeed, the asynchronous scheme enables the emergence of complex behaviours that are not revealed with the synchronous scheme \cite{de_maria_boolean_2022}. The most permissive updating scheme (\mps scheme), recently introduced in \cite{pauleve_reconciling_2020}, is a non-deterministic scheme that only applies to BMs. The \mps scheme adds to each component two intermediate activity levels that reflect the transition between the two Boolean levels. In these intermediate levels, the components are considered by their targets to be both active and inactive. The update is then asynchronous. The dynamics of a BM under the \mps scheme captures the dynamics obtained under the synchronous and asynchronous schemes. Interestingly, it also captures the dynamics of any multivalued refinement of a BM \cite{pauleve_reconciling_2020}.  

Attractors, which represent the long-term behaviours of a model, are key features of the dynamics. Once identified, they can be associated with distinct phenotypic readouts \cite{schwab_concepts_2020}. The existence of a trajectory from a given state to a specific attractor proves the reachability of the attractor from this state. The number of attractors present in the dynamics, and their reachability, are impacted by the choice of the updating scheme. Indeed, the reachability properties are easier to achieve in the \mps dynamics, compared with the synchronous and asynchronous dynamics, or in any multivalued refinement of the BM (see \cite{pauleve_reconciling_2020}).

We introduce here MRBM (Multivalued Refinement of Boolean Model), a method which aims at identifying components to be multivalued in a refinement of a BM in order to provide the desired reachabilities within the asynchronous dynamics. MRBM utilizes the partial \mps scheme, a new updating scheme that we are proposing here (Section \ref{partialmp}) and that we adapted from the \mps scheme. In the partial \mps, only a subset of the model's components is updated using the \mps scheme, the remaining components being updated with the asynchronous scheme. The resulting dynamics helps to pinpoint the components of the BM that need to be multivalued. In this manuscript, we give a detailed description of the method MRBM. We illustrated its potential with a Toy model (Section \ref{mrbm}). We then used the MRBM method to propose multivalued refinements of two BMs of stem cell differentiation. The first model (Section \ref{sec: HSC}) describes the aging effects on hematopoietic stem cell differentiation. The second model (Section \ref{sec: AT}) explores the differentiation of stem cells in the root of \emph{Arabidopsis thaliana}. In both cases, we found a multivalued refinements which display reachabilities that were not present in the BMs.

\section{Method}
\label{sec:Method}
\subsection{Boolean Modeling} \label{sec:BM}
Set $\B = \{0,1\}$ and $n$ an integer $>0$. 
A {\it Boolean Model} (BM) of dimension $n$ is a map $f = (f_1, \dots, f_n) \;:\; \B^n\;\longrightarrow \B^n\,,$ where each Boolean state  $x = (x_1, \dots, x_n) \in \B^n$ specifies the activity level of $n$ components of the model $g_1$, ..., $g_n$. The values of the logical regulatory function $f_i(x) \in \B$ provide the target levels of the $i$-th component.
This model is associated with a {\it regulatory graph}, i.e. a directed signed graph. The $n$ nodes $g_1, ... , g_n$ of the graph are the components of the model. The edges between nodes represent the regulations between components (activation or inhibition) (Figure \ref{fig:Toy}A). 

\begin{figure}[h!]
    \centering
    \includegraphics[page=1, width=\linewidth]{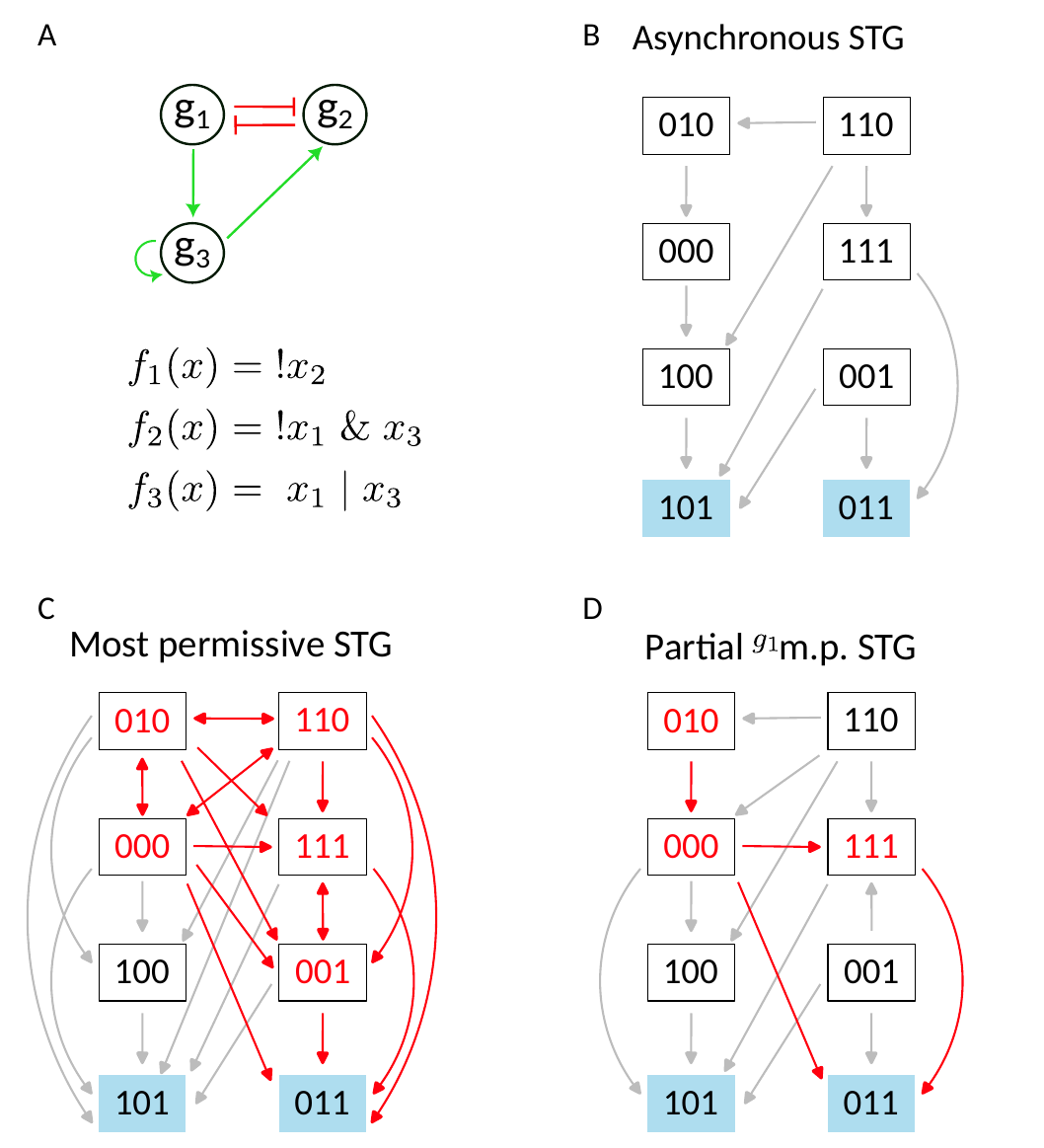}
    \caption{\textbf{Toy Boolean Model (BM)}. ({\bf A}) Regulatory graph associated with a BM $f$ composed of three components $g_1, g_2, g_3$, five regulatory interactions (red edges for inhibition, green edges for activation), and logical regulatory functions (operators $"\&", "|","!"$ stand for {\sc and, or, not} respectively). The values of the logical regulatory function $f_i(x) \in \mathbb{B}$ provide the target levels of the $i$-th component. ({\bf B}) State Transition Graph (STG) representing the Toy model's dynamics under the asynchronous scheme. Each node represents a state $x = (x_1, x_2, x_3)$ of the model and each edge represents a transition between two consecutive states. The attractors (fixed points) are colored in blue. ({\bf C}) Most permissive and ({\bf D}) Partial most permissive applied on gene $g_1$ ( $^{g_1}$\mps) STG. For the sake of visualization, we do not show the \mps states. The trajectories from the state $010$ to the state $011$ are highlighted in red.}
    \label{fig:Toy}
\end{figure}

The dynamics of the model is represented by the {\it State Transition Graph} (STG). The STG is a directed graph $(\B^n, T)$ in which $\B^n$ is the set of nodes that represent the states of the model, and $T \subset \B^n\times \B^n$ is the set of edges that represent the transitions between states. Within the {\it asynchronous scheme}, the activity level of only one component can be updated at a time according to the function $f$  (Figure \ref{fig:Toy}B). 
More precisely, there is a transition from the state $x = (x_1, \dots, x_n)$ to the state $y = (y_1, \dots, y_n)$ if there exists $i_0 \in \{1, \dots, n\}$ such that $x_{i_0} \neq f_{i_0}(x)$ and
\begin{align*}
& y_{i_0} = x_{i_0} + \text{sign}(f_{i_0}(x) - x_{i_0}), \\
& y_j = x_j,\;\; \text{for}\;\;j \neq i_0.
\end{align*}
Note that the asynchronous scheme provides a non-deterministic dynamics, as each state can have several successors.

A transition from state $x$ to state $y$ using the asynchronous scheme is denoted by $x \xrightarrow[\text{asyn}]{f} y$. A {\it trajectory} from state $x$ to state $y$ is a sequence of transitions $x \xrightarrow[\text{asyn}]{f} x' \xrightarrow[\text{asyn}]{f} \dots \xrightarrow[\text{asyn}]{f} y$. 
A state $y$ is {\it reachable} from $x$ if there exists a trajectory from $x$ to $y$. We denote this reachability by $x\stackrqarrow{f}{asyn}y$.

The {\it attractors} of the STG are the terminal strongly connected components (SCCs) of the graph, gathering the asymptotic states of the model. We distinguish two types of attractors, fixed points and cyclic attractors. Fixed points are terminal SCCs of size $1$, whereas cyclic attractors are terminal SCCs of size $\geq 2$. Given an attractor $\omega$, the basin of attraction $\mathcal{B}^f_{\text{asyn}}(\omega)$ is the set of states from which a state of $\omega$ is reachable.

The above concepts on trajectories and attractors can be adapted to multivalued models and other updating schemes.

\subsection{Multivalued Refinements of Boolean Models}\label{mv_refinements}
A natural extension of BMs is to have multivalued variables attached to some components, enabling these components to have more than two levels of activity.\\
Given $m_j\in \N ^*$ for $j\in \{1,\dots,n\}$, with at least one $m_j>1$, let $X= \prod_{j=1}^n\;\{0,1,\dots, m_j\}$.  A \textit{multivalued model} on $X$ is a map $h=(h_1, \dots , h_n):X\rightarrow X$ satisfying the condition $h_j(x)-x_j \in \{-1,0,+1\}$, for each $x=(x_1,\dots,x_n) \in  X$ and $j\in \{1,\dots,n\}$.

\medskip
Let us consider $f=(f_1, \dots, f_n)$ a BM of dimension $n$. Following \cite{pauleve_reconciling_2020}, and given $X= \prod_{j=1}^n\;\{0,1,\dots, m_j\}$ as above, we introduce for $x\in X$ the set \\
$\alpha(x)= \{x' \in \B ^n \;;\; \forall j\in \{1,\dots,n\},(x_j=0\Rightarrow x'_j=0)\textrm{ and }(x_j=m_j\Rightarrow x'_j=1)\}$.

A multivalued model $h$ on $X$ is said to be a \textit{refinement of $f$} if, for any $x=(x_1,\dots,x_n) \in  X$ and $j\in \{1,\dots,n\}$,
\begin{align*}
&h_j(x)<x_j \Longrightarrow \exists x'\in \alpha(x), f_j(x')<x'_j, \\
&h_j(x)>x_j \Longrightarrow \exists x'\in \alpha(x), f_j(x')>x'_j.
\end{align*}
\noindent
This means that, in a multivalued refinement, the level of activity of a component may decrease (resp. increase) at a given state only if its value goes to $0$ (resp. $1$) under $f$ at some Boolean state of $\alpha(x)$. 

Starting from $f$, we use the following approach to parameterize a multivalued refinement, i.e. to define the function $h$. For each multivalued component $g_j$, we consider the logical formulas of its targets $g_k$ (this means that $g_k$ is regulated by $g_j$). For each conjunctive clause (i.e. conjunctive logical assertion) of these formulas, we choose a threshold $s$ in $\{0, \dots,m_j-1\}$. This threshold specifies that the regulation of $g_k$ occurs when $x_j \in \{s+1, \dots,m_j\}$. The threshold $s$ is chosen according to the target and to the clause (see details in Supplementary Material Section \ref{AnnexA}). 

In the remaining sections of this manuscript, all the refinements we consider are built as described above.
In the Results section, we described these refinements $h$ through the functions $\mathcal{H}_j$ of $x\in X$ with values in $\{-1,+1\}$. These functions $\mathcal{H}_j$ indicate whether $x_j$ is called to decrease or to increase under $h$ (when possible, i.e. remaining between $0$ and $m_j$).

Note that in most cases, given a BM $f$, if there exists an asynchronous path from $x \in \B^n$ to $y \in \B^n$, a corresponding path also exists in the asynchronous dynamics of any refinement of $f$. Moreover, for any fixed point in the asynchronous dynamics of $f$, a corresponding fixed point exists in the asynchronous dynamics of any refinement of $f$ belonging to $\prod_{j=1}^n\;\{0, m_j\}$, and conversely.
There is one exception. When any multivalued component is self-inhibited, these results are irreversibly defeated: reachability properties are modified, and even the number and nature of attractors may change (Supplementary Material Section \ref{AnnexB}).

Finally, it should be noted that there is no need to multivalue an output (component without target) in a refinement. Indeed, the multivaluation of an output would have no consequences in terms of reachability properties.

\subsection{Most Permissive Scheme}\label{ssec:m.p.}
The \textit{most permissive scheme} (\mps scheme) proposed in \cite{pauleve_reconciling_2020} is a non-deterministic updating scheme where each component can take four activity levels: $0$, $1$, $i$, and $d$. We set $X_{m.p.}=\{0,1,i,d\}^n$. The increasing and decreasing levels $i$ and $d$ refer to intermediate levels. A component must go through these intermediate levels to reach its target level, either $0$ or $1$. A component at an intermediate level can be considered by its targets as being at either level $0$ or level $1$.

\medskip
To each state $x$ of $X_{m.p.}$ is associated a set of Boolean states: $$\gamma (x)=\{x'\in \B ^n; \forall j \in \{1,\dots,n\}, (x_j=0\Rightarrow x'_j=0)\textrm{ and } (x_j=1\Rightarrow x'_j=1)\}.$$
Then, given a BM $f$, there is a transition from a state $x \in X_{m.p.}$ to a state $y \in X_{m.p.}$ if there exists $j_0 \in \{1,...,n\}$ such that one of the following occurs
\[  
\begin{array}{ll}
- \;x_{j_0}\in \{0, d\}   \;\;\text{and}\;\;   \exists x' \in \gamma (x) \text{ such that } f_{j_0} (x') = 1   \;\;\;\;\text{and}\; \;   y_{j_0}=i, \\
-\; x_{j_0}\in \{1, i\}    \;\;\text{and}\;\;   \exists x' \in \gamma (x) \text{ such that } f_{j_0} (x') = 0   \;\;\;\;\text{and}\;\;    y_{j_0}=d, \\
-\; x_{j_0}=i                \; \;\;\;\text{and}\;\;    y_{j_0} = 1, \\
- \;x_{j_0}=d                 \;\;\;\;\text{and}\; \;   y_{j_0} = 0, \\
\end{array}
\]
\medskip
and $y_j = x_j, \text{\;for\;} j \neq j_0$.

All states with at least one component activity level equal to $i$ or $d$ will be referred to as \textit{\mps states}, and other states as \textit{Boolean states}.
For the sake of clarity, when representing the \mps STG we do not show the \mps states (Figure \ref{fig:Toy}C). 

An interesting property emerges from the \mps scheme: given a BM, the \mps scheme captures all the trajectories present in any of its multivalued refinements. This completeness property of the \mps is described in \cite{pauleve_reconciling_2020}). Consequently, if a reachability is not achievable using the \mps scheme, then no multivalued refinement will present this reachability. 

\subsection{Partial Most Permissive Scheme} \label{partialmp}

We introduce here the \textit{partial \mps scheme}, derived from the \mps scheme. Each of these new updating schemes is associated with the choice of a subset $J$ of $\{1, \dots , n\}$. We then assign the four possible levels $0$, $1$, $i$, and $d$ to components $g_j$ such that $j\in J$. We refer to these chosen components as "\mps components", and we keep the Boolean levels for all other components. 
So we consider the set of states $ X_{^Jm.p.}=\prod_{j=1}^{n} \mathcal{A}_j$, where $\mathcal{A}_j = \{0,1,i,d\}$ if $j\in J$, and $\mathcal{A}_j
=\{0,1\} $ otherwise.

Given a BM $f$ and $J$ as above, we then define the \textit{partial \jmps scheme} on $ X_{^Jm.p.}$ in the following way:
there is a transition from a state $x \in X_{^Jm.p.}$ to a state $y \in X_{^Jm.p.}$ if there exists $j_0 \in \{1,...,n\}$ such that one of the following occurs
\[  
\begin{array}{ll}
-\; x_{j_0}\in \{0, d\}   \;\;\text{and}\;\;   \exists x' \in \gamma (x) \text{ such that } f_{j_0} (x') = 1   \;\;\;\;\text{and}\; \;   y_{j_0}=i, \\
-\; x_{j_0}\in \{1, i\}    \;\;\text{and}\;\;   \exists x' \in \gamma (x) \text{ such that } f_{j_0} (x') = 0   \;\;\;\;\text{and}\;\;    y_{j_0}=d, \\
-\; x_{j_0}=i                \; \;\;\;\text{and}\;\;    y_{j_0} = 1, \\
- \;x_{j_0}=d                 \;\;\;\;\text{and}\; \;   y_{j_0} = 0, \\
-\;x_{j_0}=0 \;\;\text{and}\;\; j_0 \notin J  \;\;\text{and}\;\;   \exists x' \in \gamma (x) \text{ such that } f_{j_0} (x') = 1   \;\;\;\;\text{and}\; \;   y_{j_0}=1, \\
-\; x_{j_0}=1 \;\;\text{and}\;\;  j_0 \notin J \;\;\text{and}\;\;   \exists x' \in \gamma (x) \text{ such that } f_{j_0} (x') = 0   \;\;\;\;\text{and}\;\;    y_{j_0}=0, \\
\end{array}
\]
\medskip
and $y_j = x_j, \text{\;for\;} j \neq j_0$.



Note that if the size of the set $J$ is equal to $n$ (the dimension of the model), then the partial \jmps scheme is the \mps scheme. If $J$ is empty, the partial \jmps scheme is the asynchronous scheme.

The completeness property of the \mps scheme can be generalized to the partial \mps scheme: given a BM $f$ and a set $J$, the \jmps dynamics encompasses the dynamics of any multivalued refinement of $f$ with $\{g_j\;;\;j\in J\}$ as the set of multivalued components (see proof in Supplementary Material Section \ref{AnnexC}). 

We can emphasize the following properties : 
\begin{itemize}
    \item If $J_1$ and $J_2$ are subsets of $\{1,\dots,n\}$ such that $J_1$ is contained in $J_2$, then $X_{^{J_1}m.p.}$ is contained in $X_{^{J_2}m.p.}$; moreover, if there exists a trajectory between two states in the  $^{J_1}m.p.$ dynamics of $f$, the same holds in the $^{J_2}m.p.$ dynamics. 
    In particular, if there exists a trajectory between two Boolean states in the asynchronous dynamics of $f$, the same property holds in any partial m.p. dynamics.
    \item The asynchronous dynamics and the partial $m.p.$ dynamics of $f$ share the same fixed points.
\end{itemize}
For the sake of simplicity, in the Results section, we will identify the set $J$ as the set of \mps components.

\subsection{Assessment of fixed points reachability property}\label{sec:reachability}
Assessing the reachability of attractors involves determining whether specific states can reach an attractor, or assessing the entirety of states that can reach an attractor (i.e. characterizing the basin of attraction of an attractor). In this section, we will describe the methods we used to identify the existence of reachabilities in the dynamics of a logical model and to compute the basins of attraction.

The MRBM approach is based on comparing the reachability properties of fixed points across the dynamics obtained with different updating schemes for the same BM.
The fixed points of a BM do not change with the updating scheme. However, the basins of attraction of these fixed points differ according to the updating scheme, reflecting changes in reachability. So, to identify these changes in reachability, we compare the basins of attraction. The state spaces are not the same depending on the choice of update: the \mps states are not present in the Boolean asynchronous dynamics. In order to make a relevant comparison between \mps, partial \mps  and asynchronous dynamics, we have only considered Boolean states. 
%

In the same way, there is a one-to-one correspondence between the fixed points of a BM and those of any of its multivalued refinements, provided there is no self-inhibited multivalued component (see Section \ref{mv_refinements}). To compare the basins of attractions in a BM with the basins in a multivalued refinement, we only took into account the states of the subspace $\prod_{j=1}^n\;\{0, m_j\}$. This subspace is in correspondance with the set $\B^n = \prod_{j=1}^n\;\{0, 1\}$. 

\subsubsection{Reachability Identification}

The tools available to identify reachabilities do not support all the possible updating schemes. In this context, for BMs updated with the asynchronous or partial \mps schemes, we verified reachability with model checking \cite{baier_principles_2008}. To do so, we used the NuSMV model checker software extension \cite{cimatti_nusmv_2002} provided in the PyBoolNet Python package \cite{klarner_basins_2020} and we expressed the attractor reachability property to be verified using computational tree logic (CTL) formulas, as described in \cite{ baier_principles_2008, klarner_basins_2020}. 
For BMs updated with the \mps scheme, we used the Python package mpbn \cite{pauleve_reconciling_2020}. This package is based on Answer-Set Reprogramming \cite{gebser_answer_2013} and on the solver Clingo \cite{gebser_clingo_2014}.

\subsubsection{Basins of Attraction Identification} \label{basin of attraction sizes}
Given $f$ a logical model (Boolean or any multivalued refinement) and its asynchronous dynamics, we use model checking to compute the basin of attraction of a fixed point $\omega$. This is done by identifying, for any state $x$, whether there exists a path from  $x$ to $\omega$. 

In the partial \mps scheme, basins of attraction are computed in the same way. There is however a preliminary step to generate the partial \mps dynamics of a BM using the bioLQM toolkit \cite{naldi_biolqm_2018}. 
In the \mps scheme, the basin of attraction are computed  using the Python package mpbn \cite{pauleve_reconciling_2020}.

We use the size of the basins of attraction of fixed points as a measure of attractor reachability. 
For a logical model $f$, we define the size of the basin of attraction of $\omega$, denoted by $\#\mathcal{B}^f(\omega)$, as the number of Boolean states (or equivalent states for a multivalued refinement) present in the basin. We report the sizes of the basins as percentages relative to the sizes of the Boolean states space ($\frac{\#\mathcal{B}^f(\omega)}{2^n} \times 100 \% $).

\subsection{Availability and Implementation}
\label{sec:implementation}

The MRBM method and the BMs used in this manuscript are available on GitHub at: https://github.com/NdnBnBn/MRBM.git. This repository contains the scripts necessary to run the MRBM method within the MRBM directory. Three supplementary directories contain the toy model and the two stem cell differentiation models presented in the Results section. Each of these directories contains the original BM files, the output generated after running MRBM, and the multivalued refinements described in the results section. The method relies on the Colomoto Docker environment \cite{naldi_colomoto_2018}.

\section{Results} \label{sec:Results}

\subsection{The MRBM Method: defining Multivalued Refinements of Boolean Models} \label{mrbm}

\subsubsection{Description of the MRBM method} \label{description}
Let $f$ a BM such that its asynchronous dynamics displays several fixed points and no cyclical attractor. 
Suppose that a reachability property ${\cal P}$ is satisfied in the \mps dynamics of the BM $f$, but not in the asynchronous dynamics. The MRBM method aims to identify multivalued refinements of $f$ such that the asynchronous dynamics exhibits the reachability property ${\cal P}$.

We know that ${\cal P}$ is satisfied when the most permissive update is applied to the whole system. The question behind the MRBM method is: can ${\cal P}$ be satisfied by applying the most permissive update to only a subset of the components of the system (the other components being asynchronously updated)? If yes, this subset will contain the components that need to be multivalued for the asynchronous dynamics to satisfy ${\cal P}$.
Thus, the MRBM method is based  on the use of the partial \jmps dynamics of the BM $f$ (defined in section \ref{partialmp}), where $J$ is the set of components to be identified.

Since self-inhibited and output components are excluded from the sets $J$ (see Section \ref{mv_refinements}), we will only consider the remaining components, called the admissible components. We initiate the method by examining whether there are sets $J$ of admissible components of size $1$ that satisfy the reachability property ${\cal P}$ in the \jmps dynamics of $f$. If there are no such set $J$ of size $1$, we increment the set size by one and reexamine. This iterative process continues until we identify set(s) of size $l \leq n$ that fulfill the reachability property ${\cal P}$.

The next step of the MRBM method is to build a multivalued refinement with the components $g_j\in J$ multivalued and the other components remaining Boolean. To this goal, the logical functions of each multivalued node and its targets need to be parameterised. We favor refinements where the maximum activity of the multivalued components is as low as possible (principle of parsimony). Hence, we decided to start this parameterisation with 3 levels of activity (i.e. $m_j = 2$ if $g_j\in J$) and to increase the maximal level of activity only if necessary. Finding the multivalued functions of the refinement consists in assigning a threshold (between $1$ and $m_j$) to each of the outgoing regulations of the multivalued components.
Defining multivalued logical functions such that the asynchronous dynamics of the model satisfies the reachability property ${\cal P}$ is not an easy task and several strategies exist. One strategy, that we call exhaustive strategy, assesses all possible logical functions, i.e. assesses all possible threshold assignations to the clauses of these logical functions (see Supplementary Material Section \ref{AnnexB}). The main issue with this strategy is the combinatorial explosion of possibilities. Another strategy, that we call the \textit{ad hoc} strategy, is to identify the sequence of regulations occurring in a path of interest in the \jmps dynamics, and deduce the thresholds of regulation from it. Importantly, in both strategies, the proposed multivalued refinements can be supported or validated by the use of available biological data. This is specially relevant when using exhaustive strategy, as this process can actually generate multiple solutions. These solutions can be filtered using biological data to retain the most biologically relevant refinements.
Once the logical functions have been defined, model checking approaches are used to verify the reachability property ${\cal P}$ (see Section \ref{sec:reachability}).

\subsubsection{Illustration of the MRBM method on a Toy Boolean Model} \label{TMex}
To illustrate the MRBM method, we considered the Toy BM $f$ of Figure \ref{fig:Toy}A, which describes regulations between three components $g_1$, $g_2$, $g_3$. Both the asynchronous and the \mps dynamics of the BM contain two fixed points and no cyclic attractor. We will study a specific reachability property ${\cal P}_1$, and a set of reachability properties ${\cal P}_2$. 

First, we considered a specific reachability property ${\cal P}_1$, the existence of a trajectory from the state $010$ to the fixed point $011$. We can observe that there are trajectories from $010$ to $011$ in the \mps STG (in red in Figure \ref{fig:Toy}C), but not in the asynchronous STG (Figure \ref{fig:Toy}B). Our aim was to identify a multivalued refinement of the Toy BM for which the reachability property ${\cal P}_1$ is satisfied in the asynchronous dynamics. To do this, we used the MRBM method. Following the method described in Section \ref{description}, we first searched for sets of admissible components $J$ of size one for which the reachability $010\stackrqarrow{f}{^Jm.p.}011$ is satisfied in the partial \jmps dynamics of the BM. Among all the sets of size $1$ $\{g_1\}$, $\{g_2\}$ and $\{g_3\}$, only the set $J = \{g_1\}$ fulfils the required property (Figure \ref{fig:Toy}D).
\begin{figure}[h!]
    \centering
    \includegraphics[page=2,width=\linewidth]{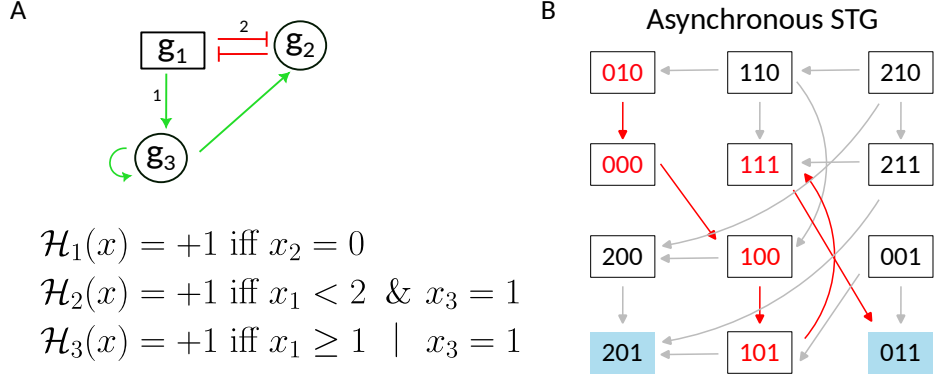}
    \caption{\textbf{Multivalued Refinement of the Toy model}. (\textbf{A}) Regulatory graph of the multivalued refinement of the BM of Figure \ref{fig:Toy}A, and associated logical functions. The rectangle represents the multivalued component $g_1$ ($m_1 = 2$); outgoing edges from $g_1$ have labels that specify the threshold levels from which the regulations occur. (\textbf{B}) Asynchronous STG of the multivalued refinement $h$. Each node represents a state $(x_1, x_2, x_3)$. Trajectories representing the reachability $010\stackrqarrow{h}{asyn.}011$ are highlighted in red.}
    \label{fig:ToyMulti}
\end{figure}
Therefore, component $g_1$ has to be multivalued. To define the multivalued functions of $g_1$ and its targets, we applied the {\it ad-hoc} strategy. We investigated the shortest trajectory from state $010$ to the fixed point $011$ in the $^{g_1}$\mps dynamics of the Toy model :$\ 010\xrightarrow{}0\textcolor{red}00\xrightarrow{}\textcolor{red}i00\xrightarrow{}i0\textcolor{red}1\xrightarrow{}i\textcolor{red}11\xrightarrow{}\textcolor{red}d11\xrightarrow{}\textcolor{red}011\ $  (in each state, the number in red indicates the component that changed during the state transition).
In this trajectory, we observed an increase in the activity level of $g_3$ when $g_1$ is at an intermediate level ($\textcolor{red}i00\xrightarrow{}i0\textcolor{red}1$). Hence, in a multivalued refinement with $m_1 = 2$, we proposed a threshold $1$ for the activation from $g_1$ to $g_3$. Additionally, the level of activity of $g_2$ increases when its activator $g_3$ is present and its inhibitor $g_1$ is at an intermediate level ($i0\textcolor{red}1\xrightarrow{}i\textcolor{red}11$). We deduced that $g_1$ needs to be fully active in order to inhibit $g_2$; so we set a threshold at level $2$ for the inhibition from $g_1$ to $g_2$. The resulting proposed multivalued refinement is described in Figure \ref{fig:ToyMulti}A. We verified that its asynchronous dynamics satisfies the  reachability property ${\cal P}_1$, that is the reachability from the state $010$ to the fixed point $011$ (see Figure \ref{fig:ToyMulti}B).

\begin{table}[h!]
    \centering
    \includegraphics[page=3,width=0.7\linewidth]{Figures.pdf}
    \caption{\textbf{Sizes of the Basins of Attraction of the Toy Model Under Different Updating Schemes}. Sizes of the basins of attraction of the fixed points $101$ and $011$ (as a percentage of the size of the Boolean state space) obtained when using either the asynchronous, \mps, or partial \jmps schemes ($J = \{g_1\}$, $\{g_2\}$, or $\{g_3\}$). Red values signal any increase in the size of a basin of attraction when compared to the size of the basin of attraction of the asynchronous dynamics.}
    \label{tab:ToyBasin}
\end{table}

Secondly, we considered a a set of reachability properties ${\cal P}_2$. We observed that the size of the basin of attraction of the fixed point $101$ does not change with the updating scheme, whereas the size of the basin of attraction of the fixed point $011$ changes ($\# \mathcal{B} ^f _{m.p.}(011) > \# \mathcal{B}^f_{asyn}(011)$; see Table \ref{tab:ToyBasin}). As all the asynchronous trajectories are captured in the \mps STG (Section \ref{ssec:m.p.}), this change in size can be interpreted as a loss of reachability in the asynchronous dynamics.
Our aim was to build a refinement restoring all these lost reachabilities (property ${\cal P}_2$).
Using MRBM, we found that we get $\# \mathcal{B}^f_{^Jm.p.}(011) = \# \mathcal{B}^f_{m.p.}(011)$ with the partial \mps scheme with $J=\{g_1\}$. Hence, the asynchronous dynamics of the multivalued refinement defined above allows us to recover a basin of attraction of the same size as for the BM with the \mps dynamics.

\subsection{Application of MRBM to refine a BM of Early Hematopoietic Stem Cell differentiation} \label{sec: HSC}
\begin{figure}[h!]
    \centering
    \includegraphics[page = 4, width=\linewidth]{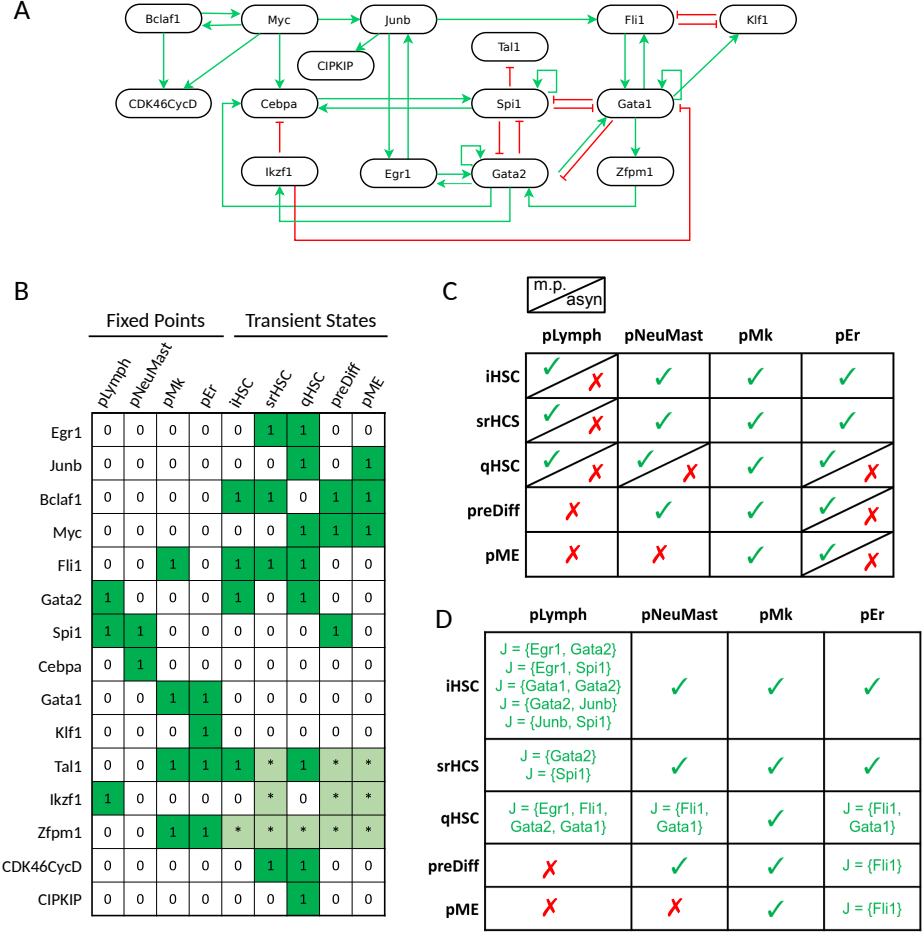}
    \caption{\textbf{Boolean Model of Early Hematopoïetic Stem Cell Differentiation}. (\textbf{A}) Regulatory graph of the BM (edges in red represent inhibitions and edges in green activations). The logical rules are detailed in \cite{herault_novel_2023}. 
    (\textbf{B}) Configuration (i.e. activity level of each component) of the 5 fixed points and the 5  transient key states (in column). Each cell of the table represents the activity level of a component (0=inactive, in white, 1 = active in dark green, * = 0 or 1, in light green). 
    (\textbf{C}) Reachabilities of the fixed points (in column) from transient states (in row) in \mps and asynchronous ("asyn) dynamics. \textcolor{ForestGreen}{\cmark} stands for "reachable", and \textcolor{red}{\xmark} for "not reachable". The cells are divided if there is a difference between the \mps and the asynchronous dynamics. (\textbf{D}) Reachabilities of the fixed points from transient states (in row) in the partial \jmps dynamics. The accessibility of transient states at fixed points in the partial dynamics of \jmps is satisfied for the $J$ sets specified in the table cell.
    }
    \label{fig:HSC}
\end{figure}

The BM depicted in Figure \ref{fig:HSC}A is a model of Hematopoietic Stem Cell (HSC) differentiation detailed in \cite{herault_novel_2023}. 
The model contains 15 components. Its dynamics displays five attractors, all of which are fixed points, each representing a distinct HSC fate (i.e. a distinct final stage of differentiation): lymphoid (pLymph), neutrophils and mastocytes (pNeuMast), erythrocytes (pEr), megakaryocytes (pMk), and an inactive state (zeros). We excluded the fixed point zeros from our study as it was not discussed in \cite{herault_novel_2023}.
The authors of \cite{herault_novel_2023} have defined (from data analysis) five transient key stages in HSC differentiation: initial HSCs (iHSC), self-renewal HSCs (srHSC), quiescent HSCs (qHSC), pre-differentiating HSCs (preDiff), and pME  (denoted "Transient states" in Figure \ref{fig:HSC}B). We tested the fixed points reachabilities from these transient states in the \mps and asynchronous dynamics (Figure \ref{fig:HSC}C). The two dynamics display seven differences in terms of reachability (Figure \ref{fig:HSC}C). 
With the asynchronous scheme, pLymph is not reachable from any of the transient stages of HSC differentiation, pNeuMast is not reachable from qHSC, and pEr is not reachable from qHSCs, pre-diff, and pME. We aimed to identify a multivalued refinement that would provide all those seven reachability properties ($\cal{P}$) under the asynchronous scheme. 

Using the MRBM method, we identified three admissible sets $J$ of size one for which the partial \mps dynamics captures part of the desired reachabilities (Figure \ref{fig:HSC}D):
\begin{itemize}[label=-]
\item With $J=\{Fli1\}$, pEr is reachable from pME and preDiff.
\item With $J=\{Gata2\}$ or $J=\{Spi1\}$, pLymph is reachable from  srHSC.
\end{itemize}
Increasing the size of the sets $J$ by 1, we identified additional sets capturing additional reachabilities of interest:
\begin{itemize}[label=-]
\item With $J=\{Fli1, Gata1\}$, pNeuMast and pEr are reachable from qHSC.
\item With $J=\{Egr1, Gata2\}$, $J=\{Egr1, Spi1\}$, $J=\{Gata1, Gata2\}$, $J=\{Gata2, Junb\}$ or $J=\{Junb, Spi1\}$, pLymph is reachable from the state iHSC.
\end{itemize}
However, pLymph is still not reachable from qHSC. We found that the set $J=\{Egr1,Fli1,Gata2,Gata1\}$ is required for the property $\cal{P}$ to be satisfied in the partial \jmps dynamics.

\begin{figure}[ht!] 
\centering
\includegraphics[width=0.5\linewidth]{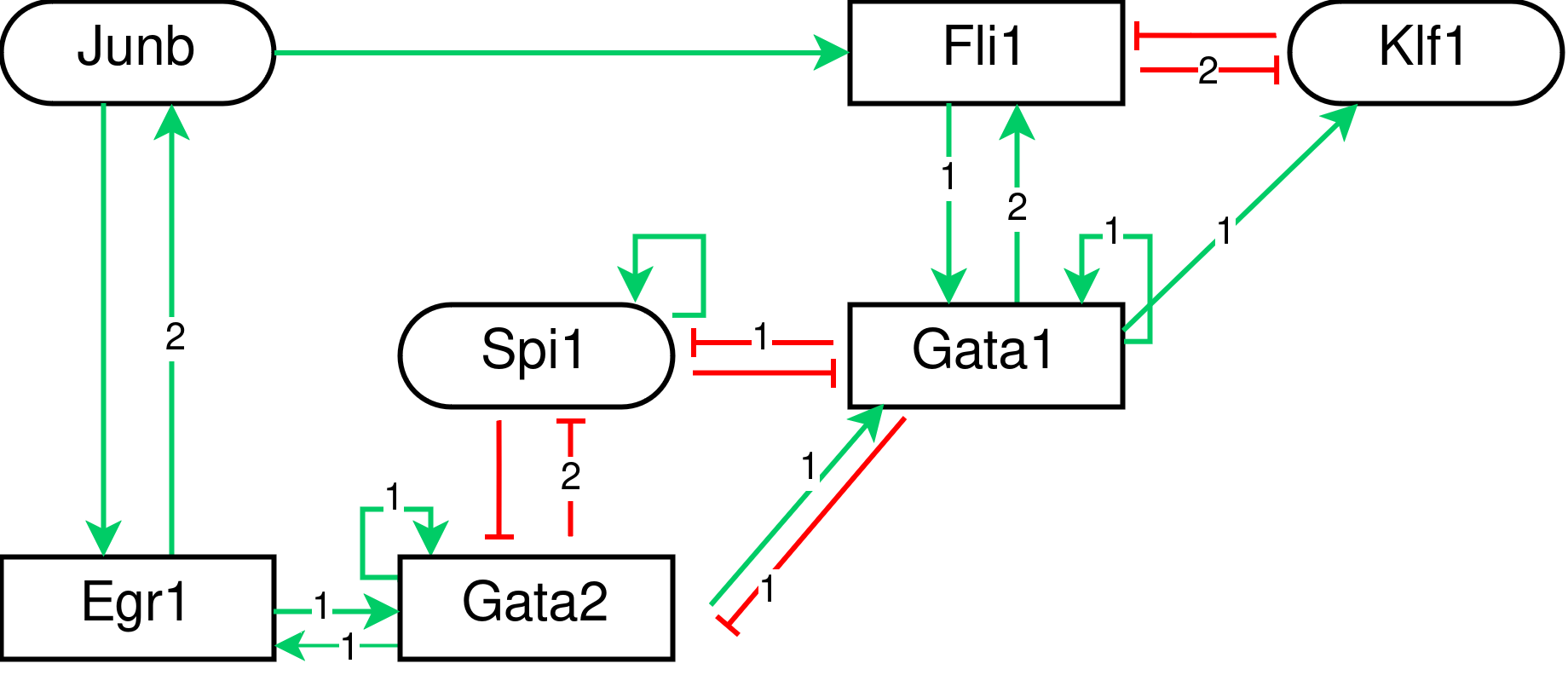}
\caption{\textbf{Subgraph of a Multivalued Refinement of the BM of Early HSC Aging}. Multivalued components are represented as rectangles, while other components are depicted as ovals. The label on arrows indicates the threshold level from which the regulation by a multivalued component occurs.}
\label{fig:paramML}
\end{figure}
We then built a multivalued refinement of the BM that satisfies properties ($\cal{P}$) under the asynchronous scheme. Finding the multivalued functions of the refinement consists in assigning a threshold $1$ or $2$ to each of the outgoing regulations of the 4 multivalued components. First we set to $2$ the maximum activity level of the four components of $J$ ($Egr1,\,Fli1,\,Gata2$, and $Gata1$). 
By analyzing in the partial \jmps dynamics the paths corresponding to the seven reachabilities we want to recover ({\it ad-hoc} strategy, similarly to the path analysis in the case of the Toy model), we decided to set the threshold of 4 regulations at level 2: the inhibition of $Klf1$ by $Fli1$, the inhibition of $Spi1$ by $Gata2$, the activation of $Fli1$ by $Gata1$, and the activation of $Junb$ by $Egr1$ (Figure \ref{fig:paramML}). The logical functions of the multivalued refinement are available in Supplementary Material Section \ref{AnnexD}.

As described in (\cite{herault_novel_2023}), available biological data support part of this parameterisation. Indeed, starting from a pre-differentiated stage (preDiff) cells can differentiate into either neutrophils/mastocytes (pNeuMast), erythrocytes (pEr) or megakaryocytes (pMk). The circuit composed of $Gata1$, $Fli1$, and $Klf1$ controls the cell fate decision between pEr and pMk. Moreover, an analysis of the dynamics of the BM shows that the switch between the pEr fixed point and pMk fixed point from the preDiff state depends on the existence of two distinct thresholds for $Fli1$'s influence on its targets $Klf1$ and $Gata1$ (see \cite{herault_novel_2023} and references therein).

\subsection{Application of MRBM to refine a Boolean Model of Asymmetric Stem Cell Division in \emph{Arabidopsis Thaliana} Root} \label{sec: AT}

As a second case study, we considered the BM of asymmetric stem cell division in the root of \textit{Arabidopsis thaliana} presented in \cite{garcia-gomez_system-level_2020} (Figure \ref{fig:AT}A). The model has six attractors, all of which are fixed points, each of them  corresponding to different cell types observed within the root of the plant, including the quiescent center (QC), peripheral and central pro-vascular initials (PPI and CPI), cortex/endodermis initials (CEI),  columella initials (CoI), and transition domain (TD) (Figure \ref{fig:AT}B).

\begin{figure}[ht!] 
\centering
\includegraphics[page = 5, width=\linewidth]{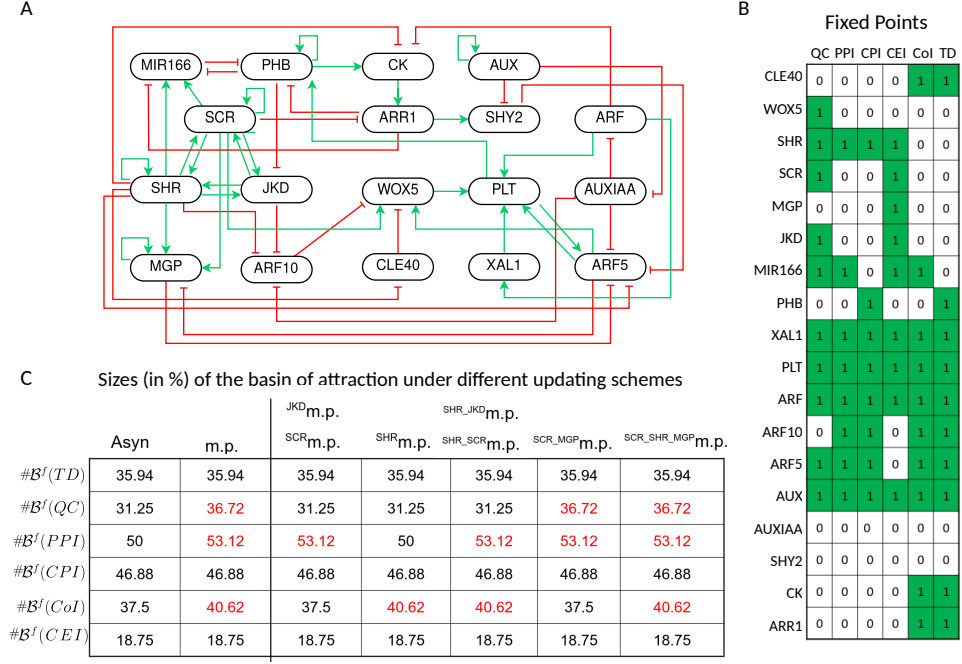}
\caption{\textbf{Boolean model of the Asymmetric Stem Cell Division in \emph{Arabidopsis Thaliana} Root}. 
(\textbf{A}) Regulatory graph of the BM; the logical functions are available in \cite{garcia-gomez_system-level_2020}. 
(\textbf{B}) Summary of the 6 fixed points. Each row represents a component, each column corresponds to a fixed point, and each cells to the activity level ($1$ for active, in dark green, $0$ for inactive, in white). 
(\textbf{C}) Sizes (in percentage) of the basins of attraction of the BM under different updating schemes. The rows represent basins of attraction and the columns correspond to the updating schemes: asynchronous ("Asyn"), \mps and partial \mps. For each partial \jmps dynamics, the contributing set $J$ is indicated in the column headers. We grouped in the same column the partial \jmps dynamics leading to same sizes of basin of attraction. We only report the smallest set(s) of \mps components resulting in a specific size of basin of attraction. Values in red correspond to increases in the sizes of the basin of attraction as compared to the sizes obtained in the asynchronous dynamics.}
\label{fig:AT}
\end{figure}

By comparing the asynchronous dynamics to the \mps dynamics of the model, we noted that the sizes of the basins of attraction of three fixed points (QC, CoI, PPI) are larger within the \mps dynamics than within the asynchronous dynamics (Figure \ref{fig:AT}C). As already mentioned in the Toy model description (Section \ref{TMex}), one can interpret this size difference as a loss of reachability in the asynchronous dynamics. 
To identify a multivalued refinement of the BM whose asynchronous dynamics provides basins of attraction as large as in the \mps dynamics, we applied the MRBM method. Considering sets $J$ of admissible \mps components of size one, we obtained:
\begin{itemize}[label=-]
\item if $J=\{JKD\}$ or $J=\{SCR\}$, then $\#\mathcal{B}^f_{^Jm.p.}(PPI) = \#\mathcal{B}^f_{m.p.}(PPI)$, 
\item if $J=\{SHR\}$, then $\#\mathcal{B}^f_{^Jm.p.}(CoI) = \#\mathcal{B}^f_{m.p.}(CoI)$. 
\end{itemize}
A set $J$ of size two is required for the basin of attraction of the fixed point $QC$:  if $J=\{MGP,SCR\}$, then $\#\mathcal{B}^f_{^Jm.p.}(QC) = \#\mathcal{B}^f_{m.p.}(QC)$. 
Thus, $J=\{SHR,MGP,SCR\}$ is sufficient to ensure identical basin of attraction sizes between the partial \mps and the \mps dynamics (Figure \ref{fig:AT}C). 

Thus, the use of MRBM efficiently restricts the search space for determining a multivalued model. In this example, the use of MRBM enabled us to identify precisely the three components on which it was relevant to search for multi-level parameterisation, without having to test all 816 possibilities (number of 3-element parts of an 18-element set).

The next step is the specification of multivalued refinements. We started looking for multivalued refinements with the maximum value of each multivalued component (that are the components in the previously identified sets $J$) fixed at 2.
We used the exhaustive strategy, testing systematically all possible multivalued  functions, i.e. all possible assignments of the thresholds ($1$ and $2$) to each of the outgoing edges of the multivalued components. We thus provided multivalued refinements which led to expected sizes of asynchronous basin of attraction for all cases except two ($J=\{SCR, MGP\}$ and $J=\{SCR,SHR,MGP\}$). Indeed, for $J=\{SCR, MGP\}$, when the maximum value of the multivalued components is restricted to $2$, only one fixed point among QC and PPI has the same size of basins of attraction in both \mps and asynchronous dynamics (top panel of Figure \ref{fig:paramAT} represents all the possible situations).
We thus increased the maximum value of activity of $SCR$ to $3$. Doing so, we were able to find a refinement which recovers the same sizes for the two basins of attraction (PPI and QC) in both dynamics (Figure \ref{fig:paramAT}, bottom panels). This 3-levels parameterisation of $SCR$ also makes it possible to find a refinement with $J={SCR,SHR,MGP}$ with which we found the same size for the three basins of attraction (PPI, QC and CoI) in both dynamics.

The resulting multivalued refinements are available in the Supplementary Material Section \ref{AnnexD}.

\begin{figure}[ht!] 
\centering
\includegraphics[width=\linewidth]{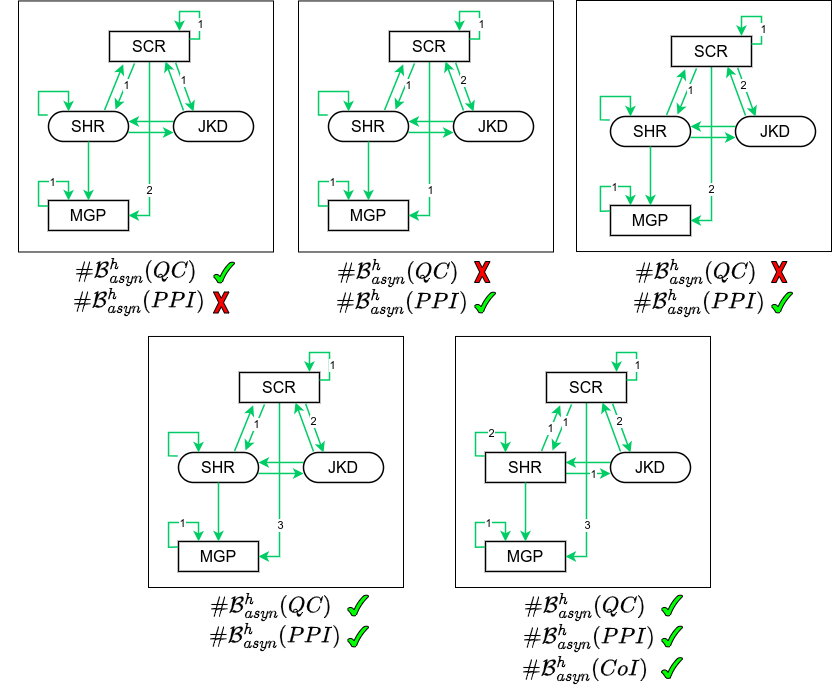}
\caption{\textbf{Subgraph of Multivalued Refinements of the BM of Asymmetric Stem Cell Division in \textit{Arabidopsis thaliana} Root}. Multivalued components are depicted as rectangles, while Boolean components are shown as ovals. The label on arrows indicates the threshold level from which the regulation by a multivalued component occurs. On the top panels $m_{SCR} = 2$, and on the bottom panels $m_{SCR} = 3$. We noted below each panel if we managed to recover the expected size(s) of basin of attraction depending on the set of multivalued components (see Figure \ref{fig:AT}C). The name of the basin of attraction is followed by a checkmark for a success, or a cross for a failure.}
\label{fig:paramAT}
\end{figure}

\section{Discussion}

Although Boolean models have proved to be a valuable and effective tool for modeling biological interaction networks, they have certain limitations due to their high level of abstraction \cite{wang_boolean_2012}. In particular, they capture the salient properties of the dynamics but might overlook other finer-grained properties \cite{chamseddine_hybrid_2020}. Multivalued logical models can be useful to overcome such limitation, as they enable the consideration of more situations (\cite{flobak_discovery_2015}, \cite{selvaggio_silico_2021}). However, the refinement of a Boolean model into a multivalued model is not straightforward. This involves finding the components that need to be multivalued, as well as parameterising these components and their targets. The possibilities are very numerous and we face a combinatorial explosion. When multivaluing a Boolean model, the aim is to capture the expected properties sparingly, by minimising the number of multivalued components and their maximum level.

In this paper, we present MRBM, an original method to refine Boolean models. The aim of these refinements of Boolean models is to encompass specific dynamical properties ${\cal P}$ that appear in the most permissive dynamics but not in the asynchronous Boolean dynamics. We have defined a new updating scheme, the partial most permissive dynamics. This updating scheme applies the most permissive update only to a subset of network components, while asynchronous updating is applied to the other components. 
This allows to identify the components of the network that control the dynamical properties ${\cal P}$, and that would need to be multivalued. Hence, we created a multivalued refinement by multivaluing these components in the Boolean model. Doing so, we induce the dynamical properties ${\cal P}$ in the asynchronous dynamics of a multivalued refinement model. Importantly, the method is supported by mathematical arguments and proofs (see supplementary). The effectiveness of MRBM is demonstrated on a Toy model and two published Boolean models of cell differentiation, highlighting its ability to handle complex biological systems.  



The MRBM method systematically identifies the components to be multivalued but the parameterisation itself remains challenging. We propose  two main strategies: 1) An exhaustive strategy, in which all possible parameterisations are tested and then checked to verify if they satisfy the desired property. This strategy can be time consuming, especially as the maximum activity levels allowed for each components 
increase. 2) An {\it ad hoc} strategy, which analyses precisely relevant paths within the selected partial 
dynamics and thereby provides clues for deciphering the regulation thresholds. 
In both strategies, available a priori biological knowledges are used to guide the definition of the logical functions. 

We applied the MRBM method to two published Boolean models of differentiation: a Boolean model of mammalian early hematopoietic stem cell differentiation and a Boolean model of asymmetric stem cell division in \textit{Arabidopsis thaliana}. In both cases, we have identified reachability properties of biological interest that appear in the most permissive dynamics but not in the asynchronous dynamics. The MRBM method allowed us to successfully identify multivalued refinements of the Boolean models exhibiting these reachability properties under the asynchronous updating scheme.
In the case of the early hematopoietic stem cell differentiation Boolean model, we proposed a multivalued refinement with four multivalued components (with maximal levels of 2). Existing biological data provide support for part of the parameterisation. For the Boolean model representing asymmetric stem cell division in \textit{Arabidopsis thaliana}, we had to multivalue three components, one of them with a maximum activity level set at 3, while the others were set at 2. The maximal level of components is a feature that can be used to identify multivalued nodes requiring more detailed regulatory changes to recover the expected sizes of basins of attraction.
 
There are some limitations for the use of MRBM. 
The method is restricted to dynamics with only stable states (and no cyclic attractors) and focuses on the reachability properties of these stable states. A generalisation to the reachability properties of cyclic attractors is not straightful. Indeed, cyclic attractors depend on the updating scheme. Therefore there may not have a one-to-one correspondance between the cyclic attractors of the most permissive dynamics and those of the asynchronous dynamics. This makes it difficult to carry out the global analysis of the size of the basins of attraction, but a study of specific reachabilities between an initial state and a state belonging to a cyclic attractor remains possible.
In this work, we have forbidden the multivaluation of a self inhibited component. Indeed, the multivaluation of an auto-inhibited component leads to a change in dynamics. It would be necessary, when studying a particular case, to observe this new dynamics in detail before opting for the multivaluation of such a component.

\noindent
The method faces computational issues for large models. We used the python package mpbn to check the dynamical properties. The calculation of the size of the basins of attractions was possible for relatively small models, but when studying larger models (of more than 15 nodes), MRBM method was limited to the study of specific reachability properties.

Finally, we can point out an intrinsic limitation of the method. The number of components to be multivalued (i.e. the size of the set $J$) can be as large as $n$, the number of components. Of course, the larger the set $J$, the more difficult it is to find a parameterisation. It should be noted that difficulties in finding an appropriate refinement may indicate that the initial Boolean model is incomplete, reflecting gaps in our understanding of the system.

MRBM overall provides a systematic and innovative approach to refine Boolean models with multivaluation. It exploits the richness of most permissive dynamics, extracts its significant properties, and integrates them in the asynchronous Boolean model to refine it. Importantly, the MRBM method does not rely on external data sources, making it useful when experimental data are limited. 

\bibliographystyle{unsrtnat}
\bibliography{references} 

\newpage
\appendix

\section*{Appendix}

\section{Construction of Multivalued Refinements\\ of a BM}\label{AnnexA}

Given a BM $f$ on $\mathbb{B}^n$, we recall that given $j_0 \in \{1, \dots,n\}$, the map $f_{j_0}$ can be set in a disjunctive normal form (DNF). In other words, $f_{j_0}(x)$ is written as a disjunction of clauses that are themselves conjunctions of radicals $w\,x_j$ as below:
        \begin{equation}
    \label{DNF}
    f_{j_0}(x)=\underset{h \in \{1,\dots,s_{j_0}\}}\bigvee ~ \underset{k \in \{1,\dots, r_{j_0,h}\}}\bigwedge w_{j_0,h,k}\,x_{j_{j_0,h,k}}\;\;,
\end{equation}
        where operators $\wedge,\, \vee$ stand for {\sc and, or} respectively, and the $w_{j_0,h,k}$ are equal to $\varepsilon$ (empty string), or $\neg$ ({\sc non}).\\
        Moreover, this DNF is supposed to be a shortest one - this ensures that components involved in the expression of $f_{j_0}$ are effective regulators of $g_{j_0}$, and avoids redundancy.

\medskip
 A way to build a refinement $h$ of $f$ on $X = \prod_{j=1}^n\;\{0,1,\dots, m_j\}$ is then the following: 
\begin{itemize}[label = -]
\item Starting from (\ref{DNF}), and the current variable becoming $x\in X$, for each $j=j_{j_0,h,k}$ in (\ref{DNF}) for which $m_j >1$, we choose a partition of $\{0, \dots,m_j\}$ of the form $ \{0, \dots,s\}\cup \{s+1, \dots,m_j\}$ and we replace $w_{j_0,h,k}\,x_{j_{j_0,h,k}}$ with $x_{j_{j_0,h,k}}\geq s+1$ if $w_{j_0,h,k}=\varepsilon$, and with $x_{j_{j_0,h,k}}< s+1$ if $w_{j_0,h,k}=\neg$. Clauses with indices $j$ such that $m_j=1$ remain unchanged.

\item In this way, we obtain a function of the variable $x\in X$ with values in $\B$; let $\cal{H}$$_{j_0}$ be defined by $\cal{H}$$_{j_0}(x)=+1$ when this value is $1$, and $\cal{H}$$_{j_0}(x)=-1$ when it is $0$.

\item Finally, we define the refinement $h=(h_1 \dots,h_n)$ setting for $j\in \{1,\dots,n\}$
\begin{center}
    $h_{j}(x) =max(0,min(m_{j},x_j+$$\cal{H}$$_{j}(x)\, ))$,\end{center}
that ensures that the map $h$ is actually a map from $X$ to itself.

\end{itemize}

\section{Comparison of Reachability Properties in a BM and Multivalued Refinements}\label{AnnexB}

\begin{proposition}  \label{prop refinement}
Let $f$ be a BM on $\B^n$, the map $h$ a refinement of $f$ on $X= \prod_{j=1}^n\;\{0,1,\dots, m_j\}$, and $J=\{j\in \{1, \dots,n\}\;;\; m_j>1\}$. We suppose that no component $g_j$ of $J$ is self-inhibited.
           
Set $a$ and $a'$ two Boolean states of $\{0,1\}^n$, and $b$ and $b'$ the elements of $X$ obtained replacing the coordinate $a_j$ of $a$ (resp. $a'_j$ of $a'$) by $m_j$ when $a_j=1$ (resp. $a'_j=1$), for each $j\in J$.
If there exists a trajectory from $a$ to $a'$ in the asynchronous STG of $f$, then there exists a trajectory from $b$ to $b'$ in the asynchronous STG of $h$.
\end{proposition}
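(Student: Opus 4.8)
The plan is to simulate the given Boolean trajectory inside the asynchronous dynamics of $h$, lifting each Boolean transition to a (possibly multi-step) move that keeps the non-updated coordinates pinned to their extreme values $0$ or $m_j$. For a Boolean state $x\in\B^n$, write $\widetilde{x}\in X$ for its \emph{extreme lift}, defined by $\widetilde{x}_j=m_j$ if $x_j=1$ and $j\in J$, and $\widetilde{x}_j=x_j$ otherwise; thus $b=\widetilde{a}$ and $b'=\widetilde{a'}$. Writing the Boolean trajectory as $a=x^0\to x^1\to\cdots\to x^N=a'$, where the step $x^k\to x^{k+1}$ flips a single coordinate $i_0$, it suffices to produce, for each $k$, an $h$-trajectory from $\widetilde{x^k}$ to $\widetilde{x^{k+1}}$ and then concatenate.

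First I would establish how $h$ behaves on the extreme subspace $\prod_j\{0,m_j\}$. Recall from the construction of Appendix~\ref{AnnexA} that each refined function comes from a DNF of $f_j$ by replacing every positive literal $x_{j'}$ with $x_{j'}\ge s+1$ and every negative literal with $x_{j'}<s+1$, for some threshold $s\in\{0,\dots,m_{j'}-1\}$; the sign of $\mathcal{H}_j(x)$ records whether the resulting Boolean-valued formula $\phi_j(x)$ equals $1$ or $0$. The key observation is that at any extreme state $\widetilde{x}$ each multivalued literal reduces to the corresponding Boolean literal: since $0<s+1\le m_{j'}$, the truth value of $\widetilde{x}_{j'}\ge s+1$ equals $x_{j'}$ and that of $\widetilde{x}_{j'}<s+1$ equals $\neg x_{j'}$, whatever the chosen threshold. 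Hence $\phi_j(\widetilde{x})=f_j(x)$ for every $j$, so on the extreme subspace $h$ points in exactly the direction dictated by $f$.

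The crux of the argument, and the step I expect to be the main obstacle, is a flip of a multivalued coordinate $i_0\in J$, which must be realised by a whole staircase of single $h$-steps from $0$ to $m_{i_0}$ (or from $m_{i_0}$ to $0$), with every other coordinate held at its extreme value. To see that each intermediate step is legitimate I would invoke monotonicity: because $g_{i_0}$ is \emph{not self-inhibited}, every occurrence of $x_{i_0}$ in the DNF of $f_{i_0}$ is a positive literal, so $\phi_{i_0}$ is non-decreasing in its $i_0$-th argument. Suppose the step raises $i_0$ from $0$ to $1$, forcing $f_{i_0}(x^k)=1$; then $\phi_{i_0}(\widetilde{x^k})=1$ by the extreme-state evaluation, and monotonicity gives $\phi_{i_0}=1$ at every state obtained from $\widetilde{x^k}$ by setting the $i_0$-th coordinate to $t\in\{0,\dots,m_{i_0}-1\}$. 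Thus $\mathcal{H}_{i_0}=+1$ throughout and the successive steps $t\mapsto t+1$ carry $\widetilde{x^k}$ up to $\widetilde{x^{k+1}}$. The symmetric argument, using $\phi_{i_0}\le 0$ below an extreme value at which it vanishes, handles a downward flip. This is precisely where excluding self-inhibition is indispensable: a negative self-literal would let $\phi_{i_0}$ switch sign midway and stall the staircase.

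It then remains to treat a flip of a Boolean coordinate $i_0\notin J$, which is immediate: $m_{i_0}=1$, the move is a single step, and the extreme-state evaluation $\phi_{i_0}(\widetilde{x^k})=f_{i_0}(x^k)$ shows $\mathcal{H}_{i_0}$ points the right way, so $\widetilde{x^k}\to\widetilde{x^{k+1}}$ is a valid $h$-transition. Concatenating the lifted moves over $k=0,\dots,N-1$ then yields an asynchronous $h$-trajectory from $\widetilde{a}=b$ to $\widetilde{a'}=b'$, as required.
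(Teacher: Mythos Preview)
Your proposal is correct and follows essentially the same approach as the paper: reduce to a single asynchronous transition, observe that on the extreme subspace the refined formula agrees with $f$, and then run the staircase on the updated coordinate using the no-self-inhibition hypothesis. The only cosmetic difference is that the paper argues by picking one satisfied clause of the DNF of $f_{i_0}$ and noting that $x_{i_0}$ cannot occur in it, whereas you package the same observation as monotonicity of $\phi_{i_0}$ in its $i_0$-th argument; the two formulations are equivalent.
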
 

\begin{proof} 
It is sufficient to prove this property in the case where there is a transition $a\rightarrow a'$ in the asynchronous STG of $f$, and $a_1 \neq a'_1$.
Suppose for instance that $a_1=0$ and $a'_1=1$ (the case $a_1=1$ and $a'_1=0$ can be treated in a similar way). Then the states $b$ and $b'$ differ only by their first coordinates, $b_1=0$ and $b'_1=m_1$.
            
In the previous notation, a conjunctive clause of the form $\underset{k \in \{1,\dots, r\}}\bigwedge w_{k}\,x_{j_{k}}$ in the logical formula of $f_1$ is made true by $a$.
This implies that the corresponding clause of $\Tilde{h}$ built as above is made true by $b$. 

- In the case $m_1=1$, this gives a transition from $b$ to $b'$.

- In the case $m_1>1$,  there is a transition from $b$ to the state $b^{(1)}$ obtained from $b$ by changing $b_1$ into $1$. Considering the hypothesis that $x_1$ does not occur in the clause $\underset{k \in \{1,\dots, r\}}\bigwedge w_{k}\,x_{j_{k}}$, this can be iterated until getting all the transitions $b\rightarrow b^{(1)} \rightarrow \dots \rightarrow b^{(m_1)}=b'$, as expected.
\end{proof} 

\begin{remark}
If a component $g_j$ of $J$ is self-inhibited, this proposition is irrevocably defeated. 
This can be illustrated for instance setting $n=2$, and for $x\in \B ^2$,\\ 
$f_1(x)=\neg x_1 \;\vee \;\neg x_2\,$, $\;\;f_2(x)=\neg x_1$.

- An example of refinement $h$ of $f$ on $X=\{0,1,2\} \times \{0,1\}$ is given by\\ 
$\cal{H}$$_1(x)=+1$ iff $x_1<2\;\vee \;\neg x_2\,$,  $\;\;\;$ $\cal{H}$$_2(x)=+1$ iff $x_1<1\,$.\\
There is a transition from $11$ to $01$ in the asynchronous STG of $f$, but no trajectory from $21$ to $01$ in the asynchronous STG of $h$.

- Another example of refinement $h'$ of $f$ on $X=\{0,1,2\} \times \{0,1\}$ is given by\\ 
$\cal{H}$$'_1(x)=+1$ iff $x_1<1\;\vee \; \neg x_2\,$,  $\;\;\;$ $\cal{H}$$'_2(x)=+1$ iff $x_1<2\,$.\\
The state $10$ is a fixed point of $f$, and the lonely attractor of its asynchronous dynamics. There is a trajectory from $01$ to $10$ in the asynchronous STG of $f$, but no trajectory from $01$ to $20$ in the asynchronous STG of $h'$. Moreover, there is a new attractor in this STG of $h'$, that is a cycle of length $2$ between $01$ and $11$.

\begin{minipage}[t]{.3\linewidth}
\begin{tabular}{c|c}
    $x$&$f(x)$\\
    \hline
     0 0  &   1   1      \\
     0 1  &   1   1     \\
     1 0  &   1   0      \\
     1 1  &   0   0      \\
    
\end{tabular} 
\end{minipage}
\hfill
\begin{minipage}[t]{.3\linewidth}
\begin{tabular}{c|c}
    $x$&$h(x)$\\
    \hline
    0 0   &   1   1      \\
    1 0   &   2   0      \\
    2 0   &   2   0      \\
    0 1   &   1   1      \\
    1 1  &   2   0      \\
    2 1  &   1   0      \\
    
\end{tabular} 
\end{minipage}
\hfill
\begin{minipage}[t]{.3\linewidth}
\begin{tabular}{c|c}
    $x$&$h'(x)$\\
    \hline
    0 0   &   1   1      \\
    1 0   &   2   1      \\
    2 0   &   2   0      \\
    0 1   &   1   1      \\
    1 1  &   0   1      \\
    2 1  &   1   0       \\
\end{tabular} 
\end{minipage}
\end{remark}

\medskip
We are especially interested in the fixed points of the dynamics, for which the way we build refinements gives the following.

        \begin{proposition} \label{fixed points}
          Let $f$ be a BM on $\{0,1\}^n$, the map $h$ a refinement of $f$ on $X= \prod_{j=1}^n\;\{0,1,\dots, m_j\}$, and $J=\{j\in \{1, \dots,n\}\;;\; m_j>1\}$. We suppose that no component $g_j$ of $J$ is self-inhibited.
\begin{itemize}
    \item 

          Set $\omega$ be a fixed point of $f$, and $\omega '$ the element of $X$ obtained replacing the coordinates $\omega_j$ of $\omega$ equal to $1$ by $m_j$, for each $j\in J$. Then $\omega '$ is a fixed point of $h$, and all the fixed points of $h$ are obtained in this way.

    \item    Let suppose that the attractors of the asynchronous dynamics of $f$ are $l$ fixed points $\omega^{(1)}$, ..., $\omega ^{(l)}$. Then the $l$ fixed points $\omega'^{(1)}$, ..., $\omega '^{(l)}$ of $h$ obtained as above are the lonely attractors of the asynchronous dynamics of $h$. 
          \end{itemize}
        \end{proposition}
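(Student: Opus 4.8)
The plan is to route everything through a single bridging observation that relates the increment function $\mathcal{H}_j$ of the refinement (built as in Section~\ref{AnnexA}) to the Boolean map $f_j$ on the \emph{extreme} states, i.e.\ the elements of $\prod_{j=1}^n\{0,m_j\}$. First I would establish the following lemma: if $\beta\in\prod_{j=1}^n\{0,m_j\}$ and $a\in\B^n$ is the Boolean state with $a_j=0$ when $\beta_j=0$ and $a_j=1$ when $\beta_j=m_j$ (so that $\alpha(\beta)=\{a\}$), then $\mathcal{H}_j(\beta)=+1$ if and only if $f_j(a)=1$. This is immediate from the construction: a positive literal $x_k$ of the DNF of $f_j$ becomes a threshold literal $x_k\geq s+1$, which at $\beta_k\in\{0,m_k\}$ is true exactly when $\beta_k=m_k$, i.e.\ when $a_k=1$; a negated literal $\neg x_k$ becomes $x_k<s+1$, true exactly when $\beta_k=0$, i.e.\ when $a_k=0$; and literals with $m_k=1$ are unchanged. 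Hence each clause of $\mathcal{H}_j$ at $\beta$ has the same truth value as the corresponding clause of $f_j$ at $a$, so the disjunctions agree.

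With this lemma the first item follows in both directions. For the forward direction, $\omega'$ is extreme and $\alpha(\omega')=\{\omega\}$; since $f_j(\omega)=\omega_j$, the lemma gives $\mathcal{H}_j(\omega')=-1$ where $\omega_j=0$ and $\mathcal{H}_j(\omega')=+1$ where $\omega_j=1$, and substituting into $h_j(\omega')=\max(0,\min(m_j,\omega'_j+\mathcal{H}_j(\omega')))$ returns $\omega'_j$ in both cases, so $\omega'$ is a fixed point of $h$. For the converse, I would first note that any fixed point $\eta$ of $h$ is necessarily extreme: if $0<\eta_j<m_j$ then $h_j(\eta)=\eta_j\pm1\neq\eta_j$, a contradiction. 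Writing $a$ for the associated Boolean state, the equations $h_j(\eta)=\eta_j$ force $\mathcal{H}_j(\eta)=-1$ where $\eta_j=0$ and $\mathcal{H}_j(\eta)=+1$ where $\eta_j=m_j$, and the lemma translates these into $f_j(a)=a_j$ for every $j$, so $a$ is a fixed point of $f$ and $\eta=a'$. I would remark that this first item does not use the self-inhibition hypothesis.

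For the second item the fixed points ${\omega'}^{(k)}$ are automatically attractors of $h$ (a fixed point is a terminal SCC), so the content is that there are no others; equivalently, that from every state of $X$ the asynchronous dynamics of $h$ reaches some ${\omega'}^{(k)}$. I would argue in two steps. Step~A: from an arbitrary $b\in X$, drive each coordinate to an extreme value, one coordinate at a time. Here the non-self-inhibition hypothesis is essential, because it guarantees that $x_j$ occurs only positively in $f_j$, hence the formula defining $\mathcal{H}_j$ is monotone non-decreasing in $x_j$. Consequently, if $\mathcal{H}_j(x)=+1$ one may repeatedly increase $x_j$ up to $m_j$ without ever flipping $\mathcal{H}_j$ to $-1$, and if $\mathcal{H}_j(x)=-1$ one may decrease $x_j$ down to $0$; each such update is a legal asynchronous transition of $h$. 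Processing the coordinates successively (a coordinate already at $0$ or $m_j$ is never revisited) produces a trajectory from $b$ to some extreme state $\beta$. Step~B: let $a\in\B^n$ correspond to $\beta$; since by hypothesis the only attractors of $f$ are the fixed points $\omega^{(k)}$, the state $a$ lies in the basin of some $\omega^{(k)}$, so there is a trajectory $a\stackrqarrow{f}{asyn}\omega^{(k)}$ in $f$. Proposition~\ref{prop refinement} then lifts this to a trajectory $\beta\stackrqarrow{h}{asyn}{\omega'}^{(k)}$ in $h$. Concatenating the two steps shows every state of $h$ reaches some ${\omega'}^{(k)}$; as each ${\omega'}^{(k)}$ is a fixed point, any terminal SCC must reduce to a single $\{{\omega'}^{(k)}\}$, which yields the claim.

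The main obstacle is Step~A, and precisely the verification that non-self-inhibition yields the monotonicity of $\mathcal{H}_j$ in $x_j$ needed to push each coordinate to an extreme value without $\mathcal{H}_j$ oscillating. The Remark preceding the proposition (the refinement $h'$ with its spurious $2$-cycle between $01$ and $11$) shows that without this hypothesis Step~A genuinely fails and new cyclic attractors can appear, so this is exactly where the hypothesis must be spent. The remaining ingredients — the bridging lemma and the arithmetic with $\max/\min$ — are routine.
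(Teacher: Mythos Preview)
Your argument is correct and follows the same skeleton as the paper's proof: the bijection between fixed points via the extreme states of $X$, then (for the second item) driving any $x\in X$ to an extreme state and invoking Proposition~\ref{prop refinement}. The paper's proof handles both the claim that a fixed point of $h$ must be extreme and the claim that every state can be driven to an extreme state with a bare ``by the hypothesis''; your Step~A actually supplies the mechanism --- non-self-inhibition of $g_j$ forces $\mathcal{H}_j$ to be non-decreasing in $x_j$, so the $j$-th coordinate can be pushed monotonically to $0$ or $m_j$ --- and in that respect your write-up is more complete than the paper's. Your bridging lemma is the explicit form of what the paper invokes implicitly through $\alpha(\omega')=\{\omega\}$ together with the refinement inequalities, and your side observation that the first item does not rely on the self-inhibition hypothesis is correct and not stated in the paper.
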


        \begin{proof} 
        
        \begin{itemize}
    \item The fact that $\omega '$ is a fixed point of $h$ comes from the inequalities required to be a refinement and from $\alpha(\omega ')=\{\omega\}$.
    
    Conversely, suppose that a state $x\in X$ is a fixed point of $h$. Then, by the hypothesis, all the coordinates $x_j$ with $j\in J$ have to be equal to $0$ or $m_j$, and $\alpha(x)$ is reduced to one Boolean state $y$. Finally, the state $x$ being a fixed point of $h$, the state $y$ is necessarily a fixed point of $f$, as expected.
    
    \item Set $x\in X$. By the hypothesis, there is a trajectory from $x$ to a state $z$ whose coordinates $z_j$ with $j\in J$ are all equal to $0$ or $m_j$. Then, $\alpha(z)$ is reduced to one Boolean state $y$. There is a trajectory from $y$ to some fixed point $\omega^{(k)}$ of $f$. By Proposition \ref{prop refinement}, there is thus a trajectory from $z$ to $\omega'^{(k)}$. In conclusion, all the states $x$ of $X$ lead to a fixed point of $h$, that achieves the proof.
       \end{itemize}     
        \end{proof}
        
\bigskip

\section{Generalization of the Completeness Property of the m.p. Scheme to the Partial m.p. Schemes}\label{AnnexC}

For the sake of convenience, in this supplementary, trajectories from a state $x$ to a state $y$ will be denoted in a contracted manner $x \xrightarrow[\text{- - -}]{f}{}^*\;y\;$.

\subsubsection*{Completeness property of the m.p. scheme}

We begin recalling Paulevé $\&$ al theorem ([9]
), and give a detailed constructive proof.

Let us consider an integer $n\geq 1$, a BM $f$ of dimension $n$, and a refinement $h$ of $f$ defined on $X= \prod_{j=1}^n\;\{0,1,\dots, m_j\}$, where $m_j\in \N ^*$ for each $j\in \{1,\dots,n\}$, and at least one of the $m_j$ is $>1$. 
    
For $j\in \{1,\dots,n\}$, we denote by $F_j$ the map defined by $F_j(x)=f_j(x)-x_j$ for $x\in \B^n$, and  by $H_j$ the map defined by $H_j(x)=h_j(x)-x_j$ for $x\in X$ (these maps are with values in $\{-1,0,+1\}$). 

Let $x$ be an element of $X$. Let us call \textit{m.p. state compatible with $x$} any element $\hat{x}$ of $X_{m.p.}=\{0,1,i,d\}^n$ such that for each $j \in \{1,\dots,n\}$,

- if $x_j=0$, then $\hat{x}_j=0$,

- if $x_j=m_j$, then $\hat{x}_j=1$,

- if $x_j\notin \{0,m_j\}$, then $\hat{x}_j=i$ or $d$.

\begin{remark}
In both definitions of the m.p. scheme related to $f$ and of the multivalued refinements of $f$, the associated components are supposed to assume that any intermediate level of some component $g$ (the levels $i$ and $d$, the levels $l$ such that $0<l<m_j$ for some $j$) can be considered as levels where $g$ is, or is not, active. 

For $x\in X$, the set\\ $\alpha(x)= \{x' \in \B ^n \,;\, \forall j\in \{1,\dots,n\},\,(x_j=0\Rightarrow x'_j=0)\,\textrm{and}\,(x_j=m_j\Rightarrow x'_j=1)\}$,\\
       and for $x\in X_{m.p.}$ the set\\ $\gamma (x)=\{x'\in \B ^n\,;\, \forall j \in \{1,\dots,n\}, \,(x_j=0\Rightarrow x'_j=0)\;\textrm{and}\,(x_j=1\Rightarrow x'_j=1)\}$\\
are introduced to this end.

Hence, the proof of the following theorem is essentially based on the fact that if $x\in X$, and $\hat{x}$ is an element of $X_{m.p.}$ compatible with $x$, then $\alpha(x)=\gamma(\hat{x})$.
\end{remark}

\medskip
\begin{theorem} \label{Loïc} In the previous notation, let $x \xrightarrow[asyn]{h}{}^*\;y$ be a trajectory in the asynchronous STG of $h$. For any element $\hat{x}$ of $X_{m.p.}$ compatible with $x$, there exists in the STG of the m.p. dynamics of $f$ a trajectory $\hat{x} \xrightarrow[m.p.]{f}{}^* \;\hat{y}$ such that $\hat{y}$ is compatible with $y$. 
\end{theorem}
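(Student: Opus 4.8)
The plan is to prove the statement by induction on the length of the trajectory $x \xrightarrow[asyn]{h}{}^*\;y$, thereby reducing everything to the case of a single $h$-transition. The base case (length zero, $y=x$) is immediate: take $\hat{y}=\hat{x}$, which is compatible with $y=x$. For the inductive step I would write the trajectory as $x \xrightarrow[asyn]{h}{}^* z \xrightarrow[asyn]{h} y$, where the last arrow is a single asynchronous transition of $h$. The induction hypothesis applied to the shorter trajectory $x \xrightarrow[asyn]{h}{}^* z$ yields an m.p. trajectory $\hat{x} \xrightarrow[m.p.]{f}{}^*\;\hat{z}$ with $\hat{z}$ compatible with $z$; it then suffices to extend it by a short m.p. trajectory $\hat{z} \xrightarrow[m.p.]{f}{}^*\;\hat{y}$ with $\hat{y}$ compatible with $y$, and concatenate. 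So the whole proof reduces to a single-transition lemma: \emph{for any single $h$-transition $z \xrightarrow[asyn]{h} y$ and any m.p. state $\hat{z}$ compatible with $z$, there is an m.p. trajectory from $\hat{z}$ to some $\hat{y}$ compatible with $y$.}

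To prove this lemma, let $j_0$ be the unique coordinate updated in $z \to y$, so that $H_{j_0}(z)=\pm 1$, $y_{j_0}=z_{j_0}+H_{j_0}(z)$, and $y_j=z_j$ for $j\neq j_0$. For the coordinates $j\neq j_0$ I keep $\hat{y}_j=\hat{z}_j$; since $y_j=z_j$ and the m.p. rules change only the coordinate $j_0$, compatibility at these coordinates is inherited from $\hat{z}$, so only $j_0$ requires work. The crucial input is the Remark: compatibility of $\hat{z}$ with $z$ gives $\alpha(z)=\gamma(\hat{z})$. Hence the refinement condition on $h$ furnishes a Boolean witness lying in $\gamma(\hat{z})$: if $H_{j_0}(z)=+1$ there is $x'\in \gamma(\hat{z})$ with $f_{j_0}(x')=1$, and if $H_{j_0}(z)=-1$ there is $x'\in\gamma(\hat{z})$ with $f_{j_0}(x')=0$. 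This is exactly the witness demanded by the first (resp. second) m.p. rule.

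It then remains to translate one level change of $z_{j_0}$ into one or two m.p. transitions of that coordinate. Consider the increasing case $H_{j_0}(z)=+1$ (the decreasing case is symmetric, exchanging $0\leftrightarrow 1$ and $i\leftrightarrow d$ and swapping the first/third m.p. rules for the second/fourth). Since $y_{j_0}=z_{j_0}+1\leq m_{j_0}$ forces $z_{j_0}<m_{j_0}$, compatibility leaves $\hat{z}_{j_0}\in\{0,i,d\}$. If $\hat{z}_{j_0}\in\{0,d\}$, the first m.p. rule applies via the witness $f_{j_0}(x')=1$ and moves the coordinate to $i$; if $\hat{z}_{j_0}=i$ this first step is simply omitted. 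In all cases the coordinate is now at level $i$. If $y_{j_0}<m_{j_0}$ (an intermediate target), I stop and set $\hat{y}_{j_0}=i$, which is compatible with $y_{j_0}\notin\{0,m_{j_0}\}$; if $y_{j_0}=m_{j_0}$ I apply the third m.p. rule $i\to 1$ and set $\hat{y}_{j_0}=1$, compatible with $y_{j_0}=m_{j_0}$. Either way $\hat{y}$ is compatible with $y$, which proves the lemma and hence the theorem.

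I expect the main obstacle to be not any single transition but the bookkeeping of the compatibility cases and confirming that each m.p. rule is genuinely applicable at the state where it is invoked. The identity $\alpha(z)=\gamma(\hat{z})$ is the bridge that makes the witness produced by the refinement condition usable as the witness required by the m.p. rules; once this is fixed, the remaining verifications amount to a finite case check on the value of $\hat{z}_{j_0}$ and on whether the target $y_{j_0}$ reaches the boundary $m_{j_0}$ (resp. $0$). A minor point to handle carefully is that the witness only needs to hold at the moment of the first transition (at $\hat{z}$), so the subsequent change of $\gamma$ after moving $\hat{z}_{j_0}$ to $i$ (resp. $d$) is harmless, since the third and fourth m.p. rules impose no condition on $f$.
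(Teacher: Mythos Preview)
Your proof is correct and follows essentially the same route as the paper: reduce to a single asynchronous transition, use the identity $\alpha(z)=\gamma(\hat z)$ to convert the refinement witness into an m.p.\ witness, and then perform a short case analysis on the value of $\hat z_{j_0}$ and on whether $y_{j_0}$ hits the boundary. Your case split (on $\hat z_{j_0}\in\{0,d\}$ versus $\hat z_{j_0}=i$, then on whether $y_{j_0}=m_{j_0}$) is a slightly more streamlined repackaging of the paper's split (on $z_{j_0}=0$, $0<z_{j_0}<m_{j_0}-1$, or $z_{j_0}=m_{j_0}-1$), but the content is the same.
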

     
\begin{proof}   
It is clear that it is sufficient to prove the result in the case where the asynchronous trajectory is reduced to one transition, that is $x \xrightarrow[asyn]{h}{}y$. 

Let $\hat{x}$ be an element of $X_{m.p.}$ compatible with $x$, and $j \in  \{1,\dots,n\}$ the integer such that $x_j\neq y_j$. 

For convenience, we suppose that $j=1$, and we detail the construction of $\hat{x} \xrightarrow[m.p.]{f}{}^* \;\hat{y}\;$ in the case $H_1(x) >0$, the case $H_1(x) <0$ being similar.
\begin{itemize}
\item If $m_1=1$, $x_1=\hat{x}_1=0$ and $y_1=1$, then the existence of $x' \in \beta(x)=\gamma(\hat{x})$ such that $F_1(x')>0$ gives 
 $$\hat{x}=(0,\hat{x} _2, \dots,\hat{x} _n)\; \rightarrow\;(i,\hat{x} _2, \dots,\hat{x} _n)\; \rightarrow\; (1,\hat{x} _2, \dots,\hat{x} _n)\;.$$

\item If $m_1>1$, $x_1=\hat{x}_1=0$ and $y_1=1$, then the existence of $x' \in \beta(x)=\gamma(\hat{x})$ such that $F_1(x')>0$ gives 
$$\hat{x}=(0,\hat{x} _2, \dots,\hat{x} _n)\; \rightarrow\;(i,\hat{x} _2, \dots,\hat{x} _n)\;.$$
 
\item If $m_1>1$, $x_1=l$, where $0<l<m_1-1$ and $y_1=l+1$, then $\hat{x}_1=i \;\textrm{or}\;d$ and $\hat{x}$ is compatible with $y$ : 

\begin{center} stay on $\hat{x}$. \end{center}
\item If $m_1>1$, $x_1=m_1-1$, $y_1=m_1$ and $\hat{x}_1=i$, then the existence of $x' \in \beta(x)=\gamma(\hat{x})$ such that $F_1(x')>0$ gives  
$$\hat{x}=(i,\hat{x} _2, \dots,\hat{x} _n)\; \rightarrow\;(1,\hat{x} _2, \dots,\hat{x} _n)\;.$$
\item If $m_1>1$, $x_1=m_1-1$, $y_1=m_1$ and $\hat{x}_1=d$, then the existence of $x' \in \beta(x)=\gamma(\hat{x})$ such that $F_1(x')>0$ gives  
$$\hat{x}=(d,\hat{x} _2, \dots,\hat{x} _n)\;\rightarrow\;(i,\hat{x} _2, \dots,\hat{x} _n)\; \rightarrow\;(1,\hat{x} _2, \dots,\hat{x} _n)\;.$$
\end{itemize} 
\end{proof}

\subsection*{Completeness property of the partial m.p. schemes}

We consider now an integer $n\geq 1$, a BM $f$ of dimension $n$, a non-empty subset $J$ of $\{1,\dots,n\}$, and a refinement $h$ of $f$ defined on $X= \prod_{j=1}^n\;\{0,1,\dots, m_j\}$, where
$m_j\in \N ^*$ for each $j\in \{1,\dots,n\}$, and $m_j>1$ if and only if $j\in J$. 
    
As above, for $j\in \{1,\dots,n\}$, we denote by $F_j$ the map defined by $F_j(x)=f_j(x)-x_j$ for $x\in \B^n$, and  by $H_j$ the map defined by $H_j(x)=h_j(x)-x_j$ for $x\in X$. 

Let $x$ be an element of $X$. We call \textit{$^J$m.p. state compatible with $x$} any element $\hat{x}$ of $X_{^Jm.p.}$ such that for each $j \in \{1,\dots,n\}$,

- if $x_j=0$, then $\hat{x}_j=0$,

- if $x_j=m_j$, then $\hat{x}_j=1$,

- if $x_j\notin \{0,m_j\}$, then $\hat{x}_j=i$ or $d$.

\begin{theorem} \label{loic generalise}  In the previous notation, let $x \xrightarrow[asyn]{h}{}^*\;y$ be a trajectory in the asynchronous STG of $h$. For any element $\hat{x}$ of $X_{^Jm.p.}$ compatible with $x$, there exists in the STG of the $^J$m.p. dynamics of $f$ a trajectory $\hat{x} \xrightarrow[^Jm.p.]{f}{}^* \;\hat{y}$ such that $\hat{y}$ is compatible with $y$. 
     \end{theorem}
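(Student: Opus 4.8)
The plan is to mimic the constructive argument of Theorem \ref{Loïc}, splitting according to whether the coordinate that moves lies in $J$ or not. The guiding observation is exactly the one recorded in the remark preceding Theorem \ref{Loïc}: for $x\in X$ and any \jmps state $\hat{x}$ compatible with $x$, one has $\alpha(x)=\gamma(\hat{x})$. The two conditions coincide coordinate-wise when $x_j\in\{0,m_j\}$, and both impose no constraint on coordinate $j$ when $0<x_j<m_j$ (which forces $j\in J$ and $\hat{x}_j\in\{i,d\}$). This identity is what lets me feed the defining inequalities of the refinement $h$ into the hypotheses of the \jmps transition rules, since those hypotheses are phrased in terms of $\gamma$ while the refinement conditions are phrased in terms of $\alpha$.

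First I would reduce to a single asynchronous step. By induction on the length of the trajectory $x \xrightarrow[asyn]{h}{}^*\;y$, it suffices to treat one transition $x \xrightarrow[asyn]{h}{} y$ and then concatenate, re-using the newly produced compatible endpoint as the starting state for the next step. So let $j$ be the unique coordinate with $x_j\neq y_j$; assume $j=1$ and $H_1(x)>0$, the case $H_1(x)<0$ being symmetric.

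Now I split on whether $1\in J$. If $1\in J$, then $m_1>1$, and the refinement inequality $H_1(x)>0$ yields some $x'\in\alpha(x)=\gamma(\hat{x})$ with $F_1(x')>0$; since for indices in $J$ the \jmps rules are verbatim the \mps rules, the four constructions of Theorem \ref{Loïc}'s proof corresponding to $m_1>1$ transport unchanged (according to the value of $x_1$, namely $0$, an interior level, or $m_1-1$, and, in the last case, whether $\hat{x}_1=i$ or $d$), producing a $\hat{y}$ compatible with $y$. If instead $1\notin J$, then $m_1=1$ and the transition is a Boolean flip $x_1=0\to y_1=1$, forcing $\hat{x}_1=0$. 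Here the passage through the intermediate level $i$ used in the first bullet of Theorem \ref{Loïc} is unavailable, since $i\notin\mathcal{A}_1$; instead the refinement inequality again supplies $x'\in\gamma(\hat{x})$ with $f_1(x')=1$, and I invoke the additional partial-scheme rule (the one requiring $x_{j_0}=0$, $j_0\notin J$, and such an $x'$) to obtain in a single step $\hat{x}\to\hat{y}$ with $\hat{y}_1=1$, which is compatible with $y$.

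The only ingredient genuinely beyond Theorem \ref{Loïc} is the case $1\notin J$, and it is in fact the simpler of the two: precisely because Boolean coordinates cannot carry the levels $i,d$ in $X_{^Jm.p.}$, the two extra direct-flip rules of the partial scheme are exactly what realise such a transition, and they do so in one step. I therefore do not expect a hard conceptual obstacle; the care lies in the bookkeeping of the induction and in checking at each step that every constructed intermediate state genuinely belongs to $X_{^Jm.p.}$ and remains compatible with the relevant endpoint.
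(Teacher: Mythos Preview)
Your proposal is correct and follows essentially the same route as the paper: reduce to a single asynchronous transition, assume without loss of generality that the moving coordinate is $1$ with $H_1(x)>0$, then split on $1\in J$ (reuse the $m_1>1$ cases of Theorem~\ref{Loïc}) versus $1\notin J$ (invoke the direct Boolean flip rule of the partial scheme). The paper's proof is terser but structurally identical, resting on the same identity $\alpha(x)=\gamma(\hat{x})$ that you spell out.
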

\begin{proof}
The proof is an easy adaptation of the one of Theorem \ref{Loïc}: we restrict ourselves to the case where the considered trajectory is reduced to one transition $x \xrightarrow[asyn]{h}{}y$, with $x_1\neq y_1$ and $H_1(x)>0$. Let $\hat{x}$ be an element of $X_{^Jm.p.}$ compatible with $x$. A suitable trajectory $\hat{x} \xrightarrow[^Jm.p.]{f}{}^* \;\hat{y}$ is obtained on the following way:    
\begin{itemize}
\item If $1\in J$ the construction of $\hat{x} \xrightarrow[^Jm.p.]{f}{}^* \;\hat{y}$ is the same than in the proof of Theorem \ref{Loïc}.
\item If $1\notin J$, then $x_1=0$, $y_1=1$ and the existence of $x' \in \beta(x)=\gamma(\hat{x})$ such that $F_1(x')>0$ gives
$$\hat{x}=(0,\hat{x} _2, \dots,\hat{x} _n)\; \rightarrow\;(1,\hat{x} _2, \dots,\hat{x} _n)\;.$$  
\end{itemize}
\end{proof}

\section{Multivalued refinement of the BM of Early Hematopoietic Stem Cell Aging}\label{AnnexD}

You can find the model in ".bnet" format within the GitHub repository's "Examples/Hérault, Léonard et al (2022)" directory. The entirety of the logical rules for the multivalued refinement presented in the results section is as follows:
\bigskip

\noindent $Egr1 = Gata2 \, \& \,Junb \\
Junb = Egr1:2 \,| \,Myc \\
Blacf1 = Myc \\
Myc = Cebpa \,\& \,Blacf1 \\
Fli1:2 = Junb \,| \,Gata1:2 \,\& \,!Klf1) \\
Gata2:2 = (Gata2 \,\& \,!Gata1 \,\&\, !Zfpm1) \,| \,(Egr1 \,\& \,!Gata1\, \& \,!Zfpm1 \,\& \,!Spi1) \\
Spi1:2 = (Spi1 \,\&\, !Gata1) \,| \,(Cebpa \,\&\, !Gata1\, \& \,!Gata2:2) \\
Cebpa = (Gata2\, \& \,!Ikzf1)\, |\, (Spi1 \,\& \,!Ikzf1) \\
Gata1:2 = Fli1 \,|\, (Gata2\, \& \,!Spi1) \,| \,(Gata1\, \& \,!Ikzf1 \,\& \,!Spi1) \\
Klf1 = Gata1 \,\& \,!Fli1:2 \\
Tal1 = Gata1 \,\& \,!Spi1 \\
Ikzf1 = Gata2 \\
Zfpm1 = Gata1 \\
CDK46CycD = Bclaf1 \,|\, Myc \\
CIPKIP = Junb$
\bigskip

To simplify notation, we denote the activity of a node by its label (e.g., $Gata2$, not $x_{Gata2}$). Additionally, for multivalued components, when the threshold of regulation is other than $1$, it is denoted by ":level". For instance, $Gata2:2$ signifies that $x_{Gata2} = 2$ whereas $Gata2$ means $x_{Gata2} = 1$.

\section{Multivalued refinement of the BM of Asymmetric Stem Cell Division in \emph{Arabidopsis Thaliana} Root}\label{AnnexE}

You can find the multivalued refinement discussed in the corresponding result section within the GitHub repository's "Examples/García-Gómez, Mónica L et al (2020)" directory.

\end{document}